\documentclass{llncs}
\usepackage{silence}
\usepackage[T1]{fontenc}
\usepackage{anyfontsize}
\usepackage[table]{xcolor}
\usepackage{pifont}

\usepackage{amsmath}
\usepackage{xfrac}
\usepackage{mathtools}
\usepackage{nicefrac}
\ActivateWarningFilters[hcounter]
\usepackage[adversary,ff,keys,operators,primitives,probability,sets,oracles,lambda,advantage,notions,asymptotics,logic]{cryptocode}
\DeactivateWarningFilters[hcounter]

\usepackage{paralist}
\usepackage[abbreviations]{glossaries-extra}
\usepackage[hidelinks]{hyperref}
\usepackage[capitalise,nameinlink]{cleveref}

\usepackage{makecell}
\usepackage{booktabs}
\usepackage[para]{threeparttable}

\usepackage{listings}

\usepackage{xspace}

\glsdisablehyper

\newacronym{aead}{AEAD}{Authenticated Encryption with Associated Data}
\newacronym{iv}{IV}{Initialization Vector}

\newabbreviation{co}{CO}{ciphertext-only}
\newabbreviation{kpa}{KPA}{Known Plaintext Attack}
\newabbreviation{cpa}{CPA}{Chosen Plaintext Attack}
\newabbreviation{ppt}{PPT}{probabilistic polynomial-time}

\newabbreviation{mte}{MtE}{MAC-then-Encrypt}
\newabbreviation{eam}{EaM}{Encrypt-and-MAC}
\newabbreviation{etm}{EtM}{Encrypt-then-MAC}

\newacronym{ietf}{IETF}{Internet Engineering Task Force}
\newacronym{rfc}{RFC}{Request for Comments}
\newacronym{ssh}{SSH}{Secure Shell}
\newabbreviation{bpp}{BPP}{Binary Packet Protocol}
\newacronym{tls}{TLS}{Transport Layer Security}
\newacronym{tcp}{TCP}{Transmission Control Protocol}
\newacronym{dtls}{DTLS}{Datagram Transport Layer Security}
\newacronym{ipsec}{IPsec}{Internet Protocol Security}
\newacronym{srtp}{SRTP}{Secure Real-time Transport Protocol}
\newacronym{esp}{ESP}{Encapsulating Security Payload}
\newacronym{ah}{AH}{Authentication Header}
\newacronym{ascii}{ASCII}{American Standard Code for Information Interchange}

\newacronym{cbc}{CBC}{Cipher Block Chaining}
\newacronym{ctr}{CTR}{Counter Mode}
\newacronym{cfb}{CFB}{Cipher Feedback Mode}
\newacronym{ccm}{CCM}{Counter with CBC-MAC}
\newacronym{gcm}{GCM}{Galois/Counter Mode}
\newacronym{ofb}{OFB}{Output Feedback Mode}

\newacronym{mac}{MAC}{Message Authentication Code}
\newabbreviation{prp}{PRP}{Pseudorandom Permutation}
\newabbreviation{prg}{PRG}{Pseudorandom Generator}

\newacronym{aes}{AES}{Advanced Encryption Standard}
\newacronym{aesgcm}{AES-GCM}{Advanced Encryption Standard in Galois/Counter Mode}
\newacronym{rc4}{RC4}{Rivest Cipher 4}
\newabbreviation{dhke}{DHKE}{Diffie-Hellman Key Exchange}

\newacronym{rq}{RQ}{Research Question}

\newacronym{beast}{BEAST}{Browser Exploit Against SSL/TLS}

\newabbreviation{mitm}{MitM}{Man-in-the-Middle}

\newacronym{acce}{ACCE}{Authenticated and Confidential Channel Establishment}
\newacronym{ake}{AKE}{Authenticated Key Exchange}
\newcommand{\modesModeled}{eight\xspace}

\newcommand{\X}{\pcnotionstyle{X}\xspace}
\newcommand{\CO}{\pcnotionstyle{CO}\xspace}
\newcommand{\KPA}{\pcnotionstyle{KPA}\xspace}
\newcommand{\CPA}{\pcnotionstyle{CPA}\xspace}
\newcommand{\intpcsa}{\pcnotionstyle{CINT\pcmathhyphen{}PCSA}\xspace}
\newcommand{\intpcsax}{\pcnotionstyle{\intpcsa\pcmathhyphen\X}\xspace}
\newcommand{\intpcsaco}{\pcnotionstyle{\intpcsa\pcmathhyphen\CO}\xspace}
\newcommand{\intpcsakpa}{\pcnotionstyle{\intpcsa\pcmathhyphen\KPA}\xspace}
\newcommand{\intpcsacpa}{\pcnotionstyle{\intpcsa\pcmathhyphen\CPA}\xspace}
\newcommand{\indpcsaccaii}{\pcnotionstyle{CIND\pcmathhyphen{}PCSA\pcmathhyphen{}CCA2}\xspace}
\newcommand{\intctxt}{\pcnotionstyle{INT\pcmathhyphen{}CTXT}\xspace}
\newcommand{\intptxt}{\pcnotionstyle{INT\pcmathhyphen{}PTXT}\xspace}
\newcommand{\sufcma}{\pcnotionstyle{SUF\pcmathhyphen{}CMA}\xspace}
\newcommand{\sprp}{\pcnotionstyle{SPRP}}

\renewcommand{\st}[1][]{\ensuremath{\mathsf{st}\if\relax\detokenize{#1}\relax\else_{#1}\fi}\xspace}
\newcommand{\sqn}[1][]{\ensuremath{\mathsf{sqn}\if\relax\detokenize{#1}\relax\else_{#1}\fi}\xspace}
\newcommand{\IV}[1][]{\ensuremath{\mathsf{IV}\if\relax\detokenize{#1}\relax\else_{#1}\fi}\xspace}
\newcommand{\pos}[1][]{\ensuremath{\mathsf{pos}\if\relax\detokenize{#1}\relax\else_{#1}\fi}\xspace}
\newcommand{\ctr}[1][]{\ensuremath{\mathsf{ctr}\if\relax\detokenize{#1}\relax\else_{#1}\fi}\xspace}
\newcommand{\ictr}[1][]{\ensuremath{\mathsf{ictr}\if\relax\detokenize{#1}\relax\else_{#1}\fi}\xspace}

\newcommand{\advst}[1][]{\ensuremath{\xi\if\relax\detokenize{#1}\relax\else_{#1}\fi}\xspace}
\newcommand{\intst}[1][]{\ensuremath{\eta\if\relax\detokenize{#1}\relax\else_{#1}\fi}\xspace}

\newcommand{\res}{\ensuremath{\mathit{res}}\xspace}
\newcommand{\idx}[1][]{\ensuremath{j\if\relax\detokenize{#1}\relax\else_{#1}\fi}\xspace}
\newcommand{\msgStore}[1][]{\ensuremath{M\if\relax\detokenize{#1}\relax\else{[#1]}\fi}\xspace}
\newcommand{\challbit}{\ensuremath{b}\xspace}
\newcommand{\challbitp}{\ensuremath{b'}\xspace}
\newcommand{\setflag}[1][]{\ensuremath{p\if\relax\detokenize{#1}\relax\else_{#1}\fi}\xspace}
\newcommand{\rcvfailflag}{\ensuremath{f_r}\xspace}

\newcommand{\kenc}{\ensuremath{\key_e}\xspace}
\newcommand{\kauth}{\ensuremath{\key_a}\xspace}
\newcommand{\salt}{\ensuremath{\mathsf{salt}}\xspace}

\newcommand{\SND}{\oracle[SND]\xspace}
\newcommand{\SNDCO}{\ensuremath{\SND\pcmathhyphen{}\oracle[CO]}\xspace}
\newcommand{\SNDKPA}{\ensuremath{\SND\pcmathhyphen{}\oracle[KPA]}\xspace}
\newcommand{\SNDCPA}{\ensuremath{\SND\pcmathhyphen{}\oracle[CPA]}\xspace}
\newcommand{\SNDX}{\ensuremath{\SND\pcmathhyphen{}\oracle[X]}\xspace}
\newcommand{\SET}{\oracle[SET]\xspace}
\newcommand{\RCV}{\oracle[RCV]\xspace}
\newcommand{\Exp}{\oracle[Exp]\xspace}

\newcommand{\cbcmode}{\ensuremath{\mathsf{CBC}}\xspace}
\newcommand{\cbcenc}{\ensuremath{\cbcmode.\enc}\xspace}

\newcommand{\ctrmode}{\ensuremath{\mathsf{CTR}}\xspace}
\newcommand{\ctrenc}{\ensuremath{\ctrmode.\enc}\xspace}

\newcommand{\gcmmode}{\ensuremath{\mathsf{GCM}}\xspace}
\newcommand{\gcmenc}{\ensuremath{\gcmmode.\enc}\xspace}

\newcommand{\streammode}{\ensuremath{\mathsf{Stream}}\xspace}
\newcommand{\streamenc}{\ensuremath{\streammode.\enc}\xspace}

\newcommand{\etm}{\ensuremath{\mathsf{\glsxtrshort{etm}}}\xspace}
\newcommand{\eam}{\ensuremath{\mathsf{\glsxtrshort{eam}}}\xspace}
\newcommand{\eamcbc}{\ensuremath{\mathsf{\eam\pcmathhyphen{}\cbcmode}}\xspace}
\newcommand{\etmcbc}{\ensuremath{\mathsf{\etm\pcmathhyphen{}\cbcmode}}\xspace}
\newcommand{\eamctr}{\ensuremath{\mathsf{\eam\pcmathhyphen{}\ctrmode}}\xspace}
\newcommand{\etmctr}{\ensuremath{\mathsf{\etm\pcmathhyphen{}\ctrmode}}\xspace}
\newcommand{\eamstr}{\ensuremath{\mathsf{\eam\pcmathhyphen{}\streammode}}\xspace}
\newcommand{\etmstr}{\ensuremath{\mathsf{\etm\pcmathhyphen{}\streammode}}\xspace}

\newcommand{\msg}{\ensuremath{m}\xspace}
\newcommand{\cmsg}{\ensuremath{c}\xspace}
\newcommand{\aeadNonce}{\ensuremath{N}\xspace}
\newcommand{\aeadAD}{\ensuremath{A}\xspace}
\newcommand{\aeadTag}{\ensuremath{\tau}\xspace}
\newcommand{\sfHeader}{\ensuremath{\mathrm{H}}\xspace}
\newcommand{\aeadAlg}{\pcalgostyle{AEAD}\xspace}

\newcommand{\aeadenc}{\pcalgostyle{\aeadAlg.\enc}\xspace}

\newcommand{\sfAlg}{\pcalgostyle{SF}\xspace}
\newcommand{\init}{\pcalgostyle{Init}\xspace}
\newcommand{\sfsetup}{\pcalgostyle{\sfAlg.\init}\xspace}
\newcommand{\sfenc}{\pcalgostyle{\sfAlg.\enc}\xspace}
\newcommand{\sfdec}{\pcalgostyle{\sfAlg.\dec}\xspace}
\newcommand{\ssh}{\pcalgostyle{SSH}\xspace}
\newcommand{\padssh}[1]{\ensuremath{\mathsf{Pad}_\ssh(#1)}\xspace}
\newcommand{\padssha}[1]{\ensuremath{\mathsf{Pad}^\ast_\ssh(#1)}\xspace}

\newcommand{\lastBlock}[1]{\ensuremath{\mathsf{LastBlock}(#1)}\xspace}

\newcommand{\lineref}[1]{\hyperref[#1]{line~\ref*{#1}}}
\newcommand{\blklen}{\ensuremath{n_\enc}\xspace}
\newcommand{\blkAlign}{\ensuremath{B}\xspace}
\newcommand{\byte}[1]{\ensuremath{\texttt{0x#1}}\xspace}
\newcommand{\len}[1]{\ensuremath{\mathsf{len}(#1)}\xspace}

\newcommand{\yes}{\ding{51}\xspace}

\newcommand{\no}{\ding{56}\xspace}

\begin{document}

\title{Formally Modeling the Terrapin Attack on SSH}

\author{Jörg Schwenk\inst{1}\orcidID{0000-0001-9315-7354} \and
Fabian Bäumer\inst{1}\orcidID{0009-0006-5569-6625} \and
Marcus Brinkmann\inst{1}\orcidID{0000-0001-5649-6357}}
\authorrunning{J. Schwenk et al.}
\institute{Ruhr University Bochum, Bochum, Germany}

\maketitle

\begin{abstract}
  The Terrapin attack against \gls{ssh} channel integrity (USENIX Security 2024) used a novel attack vector: attacks on the \emph{channel state}. Surprisingly, not all \glsxtrshort{aead} modes of \gls{ssh} were equally affected by this attack, and it remained an open question if ``unaffected'' meant ``secure''.

Existing formal models for secure channels are based on stateful encryption. However, these models do not define what the channel state is and how it is used as input to the different \gls{aead} modes.

In this paper, we propose a formal model for channel integrity under partially chosen state. Applied to the Terrapin attack, the chosen state is the SSH sequence number. It uses an abstract stateful encryption interface, for which we provide pseudocode descriptions for the \modesModeled most prominent \gls{aead} modes used in \gls{ssh}. By varying the $\SND$ oracle, we can model ciphertext-only (\glsxtrshort{co}; the Terrapin attack)\glsunset{co}, known-plaintext (\glsxtrshort{kpa})\glsunset{kpa}, and chosen-plaintext (\glsxtrshort{cpa})\glsunset{cpa} attacks. This allows us to establish concrete bounds on the security of the \gls{aead} modes. We find that all three \gls{etm} modes and ChaCha20-Poly1305 in \gls{ssh} are insecure in the \gls{co} model. AES-GCM is the only cipher secure in all three model variants. Going beyond Terrapin, we show that \gls{eam} with a \gls{cbc} cipher is secure, even in the \gls{kpa} model. In particular, we describe a novel BEAST-like chosen-plaintext attack on the channel integrity of \gls{eam}-\gls{cbc}, which separates the \gls{kpa} and \gls{cpa} models for this scheme.

  \keywords{AEAD \and Stateful Encryption \and Formal Analysis \and Adversary-chosen State \and Terrapin Attack}
\end{abstract}

\RenewCommandCopy{\state}{\cryptocodestate}

\section{Introduction}
\label{sec:intro}
\subsection{Motivation}

\subsubsection{Terrapin Attack.}

In 2024, an attack on the \gls{ssh} protocol was published~\cite{USENIX:BauBriSch24}, which attracted interest from both academia and industry. One insight from that paper is that the stateful encryption schemes deployed in \gls{ssh} are affected differently by the attack. While the channel integrity for ChaCha20-Poly1305 and \gls{etm} is broken, \glsxtrshort{gcm} and \gls{eam} modes are unaffected. This is somehow counterintuitive: if the main channel state can be controlled by an adversary, why is there this difference? And what does ``unaffected'' mean: does it mean ``secure'' in any formal sense? While the informal treatment in~\cite{USENIX:BauBriSch24} has the advantage that it can also include the application layer, a rigorous answer to these questions requires a formal analysis of the underlying constructions. Thus we raise the question:

\begin{quote}
    \emph{\Gls{rq}~1: Can we extend existing security models for secure channels to include Terrapin-like attacks on channel state and use these models to explain the differences in the security of SSH cipher modes?}
\end{quote}

\subsubsection{\glsfmtshort{ssh}.}

\Gls{ssh}~2.0, similar to \gls{tls}, establishes an authenticated confidential channel between a client and a server. The secure channel uses the \gls{bpp} and stateful encryption schemes for the packets. The \Gls{bpp}, key exchange, and server authentication are described in~\cite{rfc4253}, while two other documents specify client authentication~\cite{rfc4252} and the application-layer protocol~\cite{rfc4254}. Cipher modes in \gls{ssh} may differ significantly from other protocols; a striking example is ChaCha20-Poly1305, where the \gls{ssh} version (\cref{fig:ccp}) structurally differs from \gls{rfc}~8439~\cite{rfc8439}; thus we describe each mode in detail in \cref{sec:modes}.

The Terrapin attack~\cite{USENIX:BauBriSch24} is enabled by two design flaws in \gls{ssh}: (1) The \gls{ssh} handshake allows for many optional messages, which are not included in the hash of the partial handshake transcript; (2) SSH sequence numbers are not reset when key material changes. In the attack, the attacker manipulates the sequence numbers by injecting optional messages into the handshake and resynchronizes the sequence numbers by removing an equal number of messages at the beginning of the secure channel (see \cref{sec:bg:terrapin}). In this paper, we model the handshake message injection vector as an oracle available to the attacker and formally prove the consequences for \modesModeled different \gls{ssh} stateful encryption schemes.

\subsubsection{Secure Channels, \glsfmtshort{aead} and Stateful Encryption.}

Secure channels, like the \gls{tls} Record Layer and the \gls{ssh} \gls{bpp}, are constructed from \emph{\gls{aead} cipher modes} by using channel-specific \emph{state information}---typically sequence numbers. In fact, Rogaway~\cite{CCS:Rogaway02} identified secure network channels as the origin of \gls{aead}. Theoretical constructions of secure channels have been described, among others, by Boyd~et~al.~\cite{RSA:BHMS16} and Fischlin~et~al.~\cite{C:FGMP15}. However, the cryptographic primitive used in these constructions is not \gls{aead} but \emph{stateful encryption}, a notion introduced by Bellare, Kohno, and Namprempre~\cite{CCS:BelKohNam02} and refined in~\cite{C:JKSS12,C:KraPatWee13,RSA:BHMS16,C:FGMP15}.

Unfortunately, the interfaces of these two primitives are different; see \cref{tab:syntax}. \Gls{aead} input consists of a secret key~\key, a unique nonce~\aeadNonce, associated data~\aeadAD, and plaintext~\msg; the output is a ciphertext~\cmsg and an authentication tag~\aeadTag. In stateful encryption, a key~\key, a channel state~\st, and a message~\msg are the inputs; the output is an authenticated header~\sfHeader, the ciphertext~\cmsg, and an updated state~$\st'$. These differences make it challenging to compare formal models for secure channels with their real-world implementations. Thus we ask:

\begin{quote}
    \emph{\Gls{rq}~2: Can we map  stateful encryption schemes to \gls{aead} cipher modes to include these modes in the analysis of secure channels?}
\end{quote}

We answer \gls{rq}~2 in the affirmative: in \cref{tab:supportedaead}, we classify the input data for \modesModeled different \gls{ssh} stateful encryption schemes. We provide pseudocode for the encryption in \cref{lst:sfSshEaM,lst:sfSshEtM,lst:sfAnyXcm,lst:sfAnyCcp}. There, we show which values from \cref{tab:supportedaead} are provided via the stateful encryption interface and how the \gls{aead} input data (especially \aeadAD and~\aeadNonce) are constructed. Likewise, we show how the \gls{aead} function is called and how the state is updated. Thereby, we introduce a pseudocode translation between the stateful encryption and \gls{aead} cipher modes.

\begin{table}[t]
    \centering
    \caption{Syntax for \glsfmtshort{aead}~\cite{CCS:Rogaway02,rfc5116} and Stateful Encryption~\cite{C:JKSS12,C:KraPatWee13}.}
    \begin{tabular}{cc}
    \toprule
        Paradigm & Encryption Syntax\\
     \midrule
        \Gls{aead} & $(\cmsg, \aeadTag) \leftarrow \aeadenc(\key, \aeadNonce, \aeadAD, \msg)$ \\
    \midrule
        Stateful Enc. & $(\sfHeader, \cmsg, \st') \leftarrow \sfenc(\key, \st, \msg)$ \\
    \bottomrule
    \end{tabular}
    \label{tab:syntax}
\end{table}

\subsubsection{Formal Models and the Terrapin Attack.}

When we first analyzed the differences in resilience against the Terrapin attack among the \gls{ssh} stateful encryption schemes, we found that ChaCha20-Poly1305 is vulnerable because it has no state except for the adversary-chosen sequence numbers. But we also found that the presence of additional state does not imply security against the attack: \gls{etm} schemes are vulnerable, while \gls{eam} schemes are not. Thus, we ask:

\begin{quote}
    \emph{\Gls{rq}~3: Can we prove that SSH stateful encryption modes \emph{``unaffected''} by Terrapin are secure in our model? Can we further generalize our security model to characterize the minimum attacker capabilities required for a successful attack?}
\end{quote}

We answer it affirmatively and go beyond modeling the Terrapin attack by introducing three different \SND~oracles:

\begin{enumerate}
    \item The Terrapin attack is the strongest attack because it only assumes a weak \gls{co} adversary, which we model through a \SNDCO oracle. In the corresponding experiment, we can formally reproduce the results from the Terrapin paper with a notable difference: In our model, \gls{etm} schemes are unconditionally vulnerable, while in~\cite{USENIX:BauBriSch24}, these schemes are classified as \emph{``affected with limited exploitability''} because corrupted plaintext leads to failures at the application layer with high probability. Another difference is that we also apply the analysis of this model to \gls{etm}-Stream and \gls{eam}-Stream schemes.

    \item Standard models for secure channels~\cite{CCS:BelKohNam02,C:JKSS12,C:KraPatWee13,RSA:BHMS16,C:FGMP15,AC:PatRisShr11,C:RogZha18,EPRINT:KohPalBla03} require security under \gls{cpa} from the channel, which we model through a \SNDCPA oracle, where the adversary has full control over the plaintext. We show that in this scenario, all \gls{eam} modes become insecure. This is straightforward for \gls{eam}-\glsxtrshort{ctr} and \gls{eam}-Stream, since both are essentially stream ciphers and encryption is not protected by the \gls{mac} (\cref{subsec:cpaeam1}). For \gls{eam}-\glsxtrshort{cbc}, we present a new attack using \gls{beast}-like techniques~\cite{rizzo2011beast} to break channel integrity (\cref{subsec:cpaeam2}).

    \item In between \gls{co} and \gls{cpa}, we consider security under \gls{kpa}, which we model with a \SNDKPA oracle, where the adversary learns the plaintext but cannot choose it. We leverage this model to differentiate between the \gls{eam} modes: in this model, our new attack does not work against \gls{eam}-\glsxtrshort{cbc}, and we can prove the security of this mode, but \gls{eam}-\glsxtrshort{ctr} and \gls{eam}-Stream remain insecure.
\end{enumerate}

\subsection{Contributions}

\begin{itemize}
    \item We present the first formal security model for attacks under partially chosen channel state, which is syntax-compatible with previous work on secure \gls{tcp} channels (\cref{sec:model}). We model state manipulation by introducing \SET oracles. In the \gls{co} variant, which captures the Terrapin attack, we prove that all EtM modes and ChaCha20-Poly1305 are insecure. Going further, we prove that the \emph{``unaffected''} modes \gls{gcm}, \gls{eam}-\gls{cbc} and \gls{eam}-\gls{ctr} are in fact \gls{co}-secure. We also cover two Stream-based schemes not analyzed in~\cite{USENIX:BauBriSch24}.

    \item We extend our model to \gls{kpa} and \gls{cpa} adversaries to find the minimum attacker capabilities required to break each scheme. We prove that \gls{gcm} remains secure even against \gls{cpa}, and \gls{eam}-\gls{cbc} against \gls{kpa}, while \gls{eam}-\gls{ctr} and \gls{eam}-Stream are \gls{kpa}-insecure. For~\gls{eam}-\glsxtrshort{cbc} we give a novel \gls{cpa} attack on channel integrity, based on techniques from the \gls{beast} attack~\cite{rizzo2011beast}. This yields a strict separation between the \gls{eam} modes that the informal analysis in~\cite{USENIX:BauBriSch24} could not make (\cref{sec:analysis}).

    \item We document \modesModeled stateful encryption schemes used in \gls{ssh} in cryptographic pseudocode, making them accessible to formal analysis. We show that they are vastly different and even deviate from the \gls{aead} syntax (\cref{sec:modes}).

    \item We harmonize \gls{aead} and stateful encryption interfaces by distinguishing derived input values, randomly chosen values, channel state, associated data, and plaintext (\cref{tab:supportedaead}).
\end{itemize}

\section{Background}
\label{sec:background}
\subsection{Notation}

In this paper, we use mostly standard notation. $a \concat b$ denotes concatenation of $a$ and $b$, $\varepsilon$ is the empty word; $a, b \gets c$ is a shorthand for $a \gets c$ and $b \gets c$. $a \sample \mathcal{S}$ denotes randomly sampling an element from set~$\mathcal{S}$. $\padssh{\cdot}$ and $\padssha{\cdot}$ are the SSH padding functions described in \cref{subsec:ssh}. $\len{a}$ denotes the byte length of $a$. \lastBlock{\cmsg} returns the last cipher block of \cmsg. $\byte{ff}^i$ is the octet string that consists of $i$ repetitions of the byte \byte{ff}. $s_{i...j}$ are the $j-i+1$ bytes from index $i$ to $j$ (inclusive). $\blklen$ denotes the block length of a cipher in bits.

\subsection{\glsfmtshort{ssh}}
\label{subsec:ssh}

The core features of \gls{ssh}~2.0 are specified in \glspl{rfc}~4250--4254~\cite{rfc4250,rfc4251,rfc4252,rfc4253,rfc4254}. Several documents describe \gls{eam} modes. \Gls{rfc}~4253~\cite{rfc4253} specifies \gls{rc4}- and \gls{cbc}-based modes, \gls{rfc}~4344~\cite{rfc4344} specifies \gls{ctr}-based modes, and \gls{rfc}~4345~\cite{rfc4345} specifies additional \gls{rc4}-based modes. While \gls{rc4} has been deprecated in \gls{ssh} in \gls{rfc}~8758~\cite{rfc8758}, it may still be in use. The \gls{etm} modes are described---very briefly---in~\cite{deviationsossh}. ChaCha20-Poly1305 was added by an OpenSSH specification and was recently adopted by the \gls{ietf} sshm working group~\cite{ietf-sshm-chacha20-poly1305-04}, while \gls{gcm} is specified in \gls{rfc}~5647~\cite{rfc5647} and~\cite{miller-sshm-aes-gcm-01} (\gls{aes} only).

\subsubsection{\glsfmtshort{ssh} Key Exchange.}

\gls{rfc}~4253 describes the \gls{ssh} key exchange up to the establishment of the secure channel. This key exchange is initiated with an \gls{ascii}-over-\gls{tcp} banner exchange with no prescribed order. This is followed by two \textsc{KexInit} messages contained in unprotected \emph{\glsxtrlong{bpp}}~(\glsxtrshort{bpp}) packets; here, \gls{bpp} only adds some headers. These messages contain lists of cryptographic algorithms. Both client and server use a deterministic algorithm to derive the negotiated algorithms from these lists. Again, there is no predefined order; the client and server may proceed as they wish. The next two messages implement a \gls{dhke}, and the server adds a signature over a partial transcript of the handshake to authenticate itself. By sending \textsc{Newkeys} messages, both parties activate the secure \gls{bpp} channels using the negotiated algorithms and keys. Client authentication (\gls{rfc}~4252,~\cite{rfc4252}) is done later and is irrelevant for the Terrapin attack.

\subsubsection{\glsfmtshort{ssh} Secure Channel.}

\gls{rfc}~4253 specifies \gls{bpp} as a secure \gls{eam} channel (cf.~\cite{JC:BelNam08}): each plaintext packet consists of a length header, a padding length header, the plaintext, and random padding. The length field counts the padding length field, the payload, and the padding, but neither itself nor the \gls{mac}. The ciphertext is computed over both length fields, the payload and the padding. The \gls{mac} is computed over the sequence number followed by the same fields, and is appended in the clear. Please note that to mitigate the attack described in~\cite{SP:AlbPatWat09}, this structure has been significantly altered in different ways; for details see \autoref{sec:modes}.

\subsubsection{\glsfmtshort{ssh} Padding.}

\gls{ssh} padding plays an important role in determining the exact probabilities in some attacks and proofs. In \gls{rfc}~4253~\cite{rfc4253}, the padding of plaintext messages \msg is described for \gls{eam} modes: \msg is padded by appending $n \geq 4$ random bytes and prepending $n$ as a 1-byte padding length field, where $n$ is chosen such that $\len{\msg}+n+5$ is a multiple of $\blkAlign = \max(\sfrac{\blklen}{8}, 8)$. For stream ciphers, this yields an 8-byte alignment. We denote this process as \padssh{\msg}, whose output has length $\len{\msg}+n+1$. The remaining $4$~bytes account for the packet length field, which \padssh{\msg} does not emit but which is prepended separately and, under \gls{eam}, counted toward the alignment because it is encrypted alongside \msg. For non-\gls{eam} cipher modes, the padding differs slightly: $\len{\msg}+n+1$ is aligned instead of $\len{\msg}+n+5$. The 4-byte difference is the packet length field, which here is sent unencrypted or encrypted under a separate cipher instance and thus excluded from the alignment. We denote this variant as \padssha{\msg}.

The padding length of \gls{ssh} is inherently limited by the 1-byte padding length field and $B$. This clashes with asymptotic proofs because, while tiny, the probability of an adversary~$\adv$ guessing the padding is non-negligible in \secpar. We therefore make the reasonable assumption that $\blkAlign = \Omega(\secpar)$, corresponding to a growing $\blklen$ and minimal alignment. Consequently, $\len{\padssh{\msg}} = \Omega(\secpar)$.

\subsection{\glsfmtshort{aead} Ciphers}

Bellare and Namprempre~\cite{AC:BelNam00,JC:BelNam08} list three generic \gls{aead} construction modes. For block ciphers, there is an additional padding step, which is omitted below.

\begin{itemize}
    \item \textbf{\glsxtrfull{mte}.} In this mode, a \gls{mac}~\aeadTag is computed over the plaintext, concatenated with it, and the combination is encrypted. This is the default \gls{aead} mode in \gls{tls}~1.2.

    \item \textbf{\glsxtrfull{eam}.} The \gls{mac}~\aeadTag is computed over the plaintext, but only the plaintext is encrypted. \aeadTag is then attached to the ciphertext. This is the default mode for \gls{ssh} in~\cite{rfc4253}.

    \item \textbf{\glsxtrfull{etm}.} First, the plaintext is encrypted, then the \gls{mac}~\aeadTag is computed over the ciphertext. For \gls{ssh}, this mode is available, and it is the default mode in \gls{tls}~1.3.
\end{itemize}

Motivated by existing constructions in secure network channels, Rogaway~\cite{CCS:Rogaway02} introduced the notion of \emph{\gls{aead}}. He summarized the known generic constructions and added two novel ones: \emph{nonce stealing} and \emph{ciphertext translation}. He proposed an interface in which the input for \gls{aead} encryption is specified as a 4-tuple of a single encryption key~\key, a nonce~\aeadNonce, associated data~\aeadAD, and plaintext~\msg. The \gls{ietf} adopts this interface in~\cite{rfc5116}, and nonce stealing is widely used in specifications.

\begin{definition}[\glsfmtshort{aead}]
    An \glsxtrfull{aead} scheme is a triple $\aeadAlg = (\kgen, \enc, \dec)$ where:
    \begin{itemize}
      \item $\key \sample \kgen(\secparam)$ generates a key~$\key$ given a security parameter~$\secpar$;

      \item $\cmsg \concat \tau \sample \enc(\key, \aeadNonce, \aeadAD, \msg)$ takes a secret key~$\key$, a nonce~\aeadNonce, associated data~\aeadAD, and a plaintext~\msg, and outputs a ciphertext~$\cmsg$ and authentication tag~\aeadTag;

      \item $\msg \leftarrow \dec(\key, \aeadNonce, \aeadAD, \cmsg \concat \aeadTag)$ takes a secret key~\key, a nonce~\aeadNonce, associated data~\aeadAD, and a ciphertext~$\cmsg \concat \aeadTag$, and returns either the plaintext~\msg or a failure symbol~$\bot$.
    \end{itemize}
\end{definition}

\subsection{Stateful Encryption Schemes}

While Rogaway~\cite{CCS:Rogaway02} analyzed the construction of single network packets to derive the abstraction of \gls{aead}, Bellare, Kohno, and Namprempre~\cite{CCS:BelKohNam02} analyzed---using \gls{ssh} as an example---how the \emph{sequence} of network packets can be protected from replay, reordering, and packet deletion attacks. In practice, this is achieved by using sequence numbers in \gls{aead}. The authors formalized such constructions as \emph{stateful encryption}. In these constructions, encryptions of subsequent packets depend not only on the key, randomization, and plaintext, but also on some \emph{state} inherited from the previous packet encryption.

\begin{definition}[Stateful Encryption Scheme]
    A \emph{stateful encryption scheme} is a triple $\sfAlg = (\init, \enc, \dec)$ where:
    \begin{itemize}
      \item $(\key, \st[0]) \sample \init(\secparam)$ generates a key~$\key$ and an initial state~$\st[0]$ given a security parameter~$\secpar$;

      \item $(\sfHeader, \cmsg, \st[i+1]) \sample \enc(\key, \st[i], \msg)$ takes a secret key~$\key$, the state~\st[i], and a message~\msg, and outputs a header~\sfHeader, a ciphertext~\cmsg, and an updated state~\st[i+1];

      \item $(\msg, \st[j+1]) \leftarrow \dec(\key, \st[j], \sfHeader, \cmsg)$ takes a secret key~$\key$, the state~\st[j], a header~\sfHeader, and a ciphertext~\cmsg, and outputs a message~\msg (or~$\bot$ in case of failure), and an updated state~\st[j+1].
    \end{itemize}
    W.l.o.g., each state is partitioned as~$\st = (\advst, \intst)$ where $\advst \in \mathcal{S}_\mathsf{adv}$ is the adversary-chosen state and $\intst \in \mathcal{S}_\mathsf{sec}$ is the secure state.
\end{definition}

The abstraction of stateful encryption has been used to analyze the security of protocols~\cite{C:JKSS12,C:KraPatWee13,C:FGMP15} and to develop theoretical constructions~\cite{RSA:BHMS16}.

\subsection{Terrapin Attack}
\label{sec:bg:terrapin}

The Terrapin attack~\cite{USENIX:BauBriSch24} is illustrated in \cref{fig:terrapin}. It exploits two design flaws of the \gls{ssh} Transport Layer Protocol: first, the server's signature in the second of the two key exchange messages does not cover the whole handshake transcript, but only a subset thereof, allowing injection of optional messages (e.g., \textsc{Ignore}) that may be sent at any time during the key exchange; and third, in contrast to \gls{tls}, \gls{ssh} sequence numbers are not reset when the key exchange concludes.

\begin{figure}[t]
    \centering
    \includegraphics[width=0.65\columnwidth]{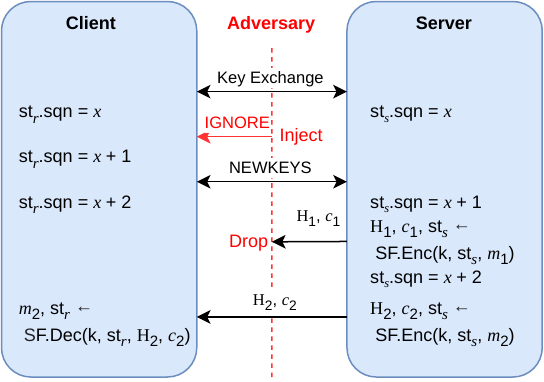}
    \caption{Illustration of the Terrapin attack on the \glsfmtshort{bpp} from~\cite{USENIX:BauBriSch24}. The server sends $m_1$ and $m_2$, but the client receives only $m_2$.}
    \label{fig:terrapin}
\end{figure}

This enables the \emph{prefix truncation attack} on the secure channel of \gls{ssh} depicted in \cref{fig:terrapin} (with $x = 2$): The \gls{mitm} adversary injects an optional and ignored message, such as \textsc{Ignore}, into the unprotected \gls{bpp}---since no encryption or message integrity is yet active, the resulting change in the client's receiving sequence number from 2 to 3 remains undetected. After receiving the legitimate \textsc{Newkeys} message from the server, the value is incremented to 4, which is then used to validate the integrity of the first encrypted message received by the client. When the server sends the \textsc{Newkeys} message, it increments the corresponding sending sequence number to 3 and, after sending $(\sfHeader_1, \cmsg_1)$, to 4. The \gls{mitm} adversary then drops $(\sfHeader_1, \cmsg_1)$ during transmission; this is the aforementioned \emph{prefix truncation}. During the stateful encryption of $(\sfHeader_2, \cmsg_2)$, the server uses sequence number 4 as part of the channel state for the server-to-client unidirectional channel. When receiving $(\sfHeader_2, \cmsg_2)$, the client uses the same value to perform the stateful decryption, which consequently succeeds, leaving the adversary's prefix truncation undetected by the recipient.

\section{Model}
\label{sec:model}
We define a model for \emph{channel integrity under partially chosen state attacks} (\intpcsa). Channel integrity addresses both forged \gls{aead} ciphertexts and reordering of messages within the channel. We formally define the security of a stateful encryption scheme as a security game played between a \gls{ppt} adversary $\adv$ and a challenger $\cdv$ that models a single unidirectional \gls{bpp} channel described as a stateful encryption scheme. The capabilities of the adversary are specified in \cref{lst:model} as pseudocode oracles. The \SND~oracle models the sender side of the channel and the \RCV~oracle models the receiver; the sender and receiver have already agreed on a common key~$\key$ and initial channel state $\st[s] = \st[r]$. The adversary may query these oracles with any inputs and receive the output values; this models that the adversary has a \gls{mitm} position. In addition to these \emph{network privileges}, $\adv$ gets one \emph{plaintext privilege} (ciphertext-only/\gls{co}, known plaintext/\gls{kpa}, chosen plaintext/\gls{cpa}) through the different \SNDX~instantiations, where $\X \in \{\CO, \KPA, \CPA\}$. Finally, $\adv$ gets \emph{partially chosen state privilege} through the \SET~oracles.

The goal of $\adv$ is to compute the challenge bit~\challbit; if its result \challbitp equals \challbit, $\adv$ \emph{wins} the game. Since a randomly chosen bit \challbit can be guessed with probability $\sfrac{1}{2}$, winning the security game does not immediately render a stateful encryption scheme insecure; therefore, we eliminate the effect of guessing by measuring the absolute difference between the winning probability of $\adv$ and the guessing probability $\sfrac{1}{2}$ and call this the \emph{advantage} of $\adv$. Since \challbit is only used in \RCV, the adversary can only increase its advantage by calling \RCV.

\begin{figure*}
    \input{img/cryptocode/int-pcsa}
    \caption{Our model for channel integrity under partially chosen state attacks. The pseudocode defines a single-instance security game played between an adversary $\adv$ and a \intpcsa challenger, where the adversary has access to both \SET~oracles, the \RCV~oracle, and one of the \SND~oracles. The different \SND~oracles model the adversary's knowledge of and control over the plaintext: \glsfmtfull{co}, known plaintext (\glsfmtshort{kpa}), or chosen plaintext (\glsfmtshort{cpa}).
    }
    \label{lst:model}
\end{figure*}

\subsection{Experiment}

\subsubsection{Exp.}

During initialization, $\cdv$ calls \sfsetup to generate a symmetric key~$\key$ uniformly at random alongside an initial state~$\st[s]$.\footnote{In real-world protocols, an \glsxtrfull{ake} protocol is used for agreement on a key~$\key$ and possibly an initial state~$\st[s]$, for example, the key exchange in the \gls{ssh} Transport Layer Protocol.} By definition, $\st[s]$ consists of an adversary-chosen state~$\advst$ and secure state~$\intst$. $\cdv$ then duplicates the initial state for the \RCV oracle as~$\st[r]$, thereby modeling synchronized initial encryption and decryption states. The structure of the state depends on the stateful encryption scheme used; in all instantiations considered in this paper, the sequence number \sqn forms the adversary-chosen state and initialized to zero (cf. \cref{tab:supportedaead}).

Besides that, we need bookkeeping variables. First, there is the challenge bit~\challbit. This bit is an artifact used to compute the advantage of the adversary in the security game and does not occur in real-world instantiations. $\cdv$ also initializes the sent and received counters, \idx[s] and \idx[r] (not to be confused with sequence numbers), which point to entries in an array \msgStore, to zero. The array \msgStore is used by $\cdv$ to track the adversary's queries to \SND and to check the correct order of decryption in \RCV. Every entry in \msgStore is initialized just-in-time; accessing an uninitialized array index returns $\bot$. We do not allow adaptive calls to \SET in our model, which is implemented by the flags \setflag[s] and \setflag[r]. The adversary can set a corresponding adversary-chosen state only before the first call to \SND or \RCV has occurred; when \SND or \RCV is called, we block the adversary from using the corresponding \SET oracle by setting \setflag[s] or \setflag[r]. After the initialization, the adversary $\adv$ is provided access to one of the three \SND oracles, as well as \RCV and both \SET oracles. After polynomially many steps, the adversary returns a bit~\challbitp.  The adversary wins the game if $\challbitp=\challbit$.

\begin{definition}
    For $\X \in \{\CO, \KPA, \CPA\}$, a stateful encryption scheme \sfAlg is \intpcsax-secure if, for every \gls{ppt} adversary~$\adv$, the advantage \[\advantage{\intpcsax}{\adv} =\\ \left|\prob{\challbitp=\challbit} - \frac{1}{2}\right|\] is negligible in~$\lambda$.
\end{definition}

\subsubsection{SND.}

\SND comes in three different flavors. \SNDCO allows the adversary to choose the byte length $\ell$ of the message and outputs the authenticated header~\sfHeader and ciphertext~\cmsg of a randomly chosen message~\msg. \SNDKPA returns the randomly chosen message~\msg in addition to $(\sfHeader, \cmsg)$. Finally, \SNDCPA allows $\adv$ to choose \msg as in a \gls{cpa} oracle. In each case, the message \msg is encrypted by calling the stateful encryption scheme; afterward, the state variable \st[s] is updated accordingly. The stateful encryption scheme can be instantiated with each of the \modesModeled schemes from \cref{tab:supportedaead} by using the appropriate pseudocode interface from one of the \cref{lst:sfSshEaM,lst:sfSshEtM,lst:sfAnyXcm,lst:sfAnyCcp}. If the stateful encryption succeeds, we store the plaintext \msg in the next free slot of the \emph{sent array} \msgStore, indexed by \idx[s]. Afterward, we increase the sent counter \idx[s], and finally return (\sfHeader, \cmsg) to the adversary $\adv$.

\subsubsection{RCV.}

\RCV is a decryption oracle that accepts chosen header data~\sfHeader and any ciphertext~\cmsg as input. \cmsg can be a ciphertext previously issued by \SND---in which case \gls{aead} decryption will most likely succeed---or a ciphertext generated or modified by $\adv$---in which case stateful decryption will most likely return an error symbol $\bot$ caused by a \gls{mac}/tag verification failure. Because of the \textbf{else} condition on line 7, in most cases, the value $\bot$ will be returned. The only exception is when \cmsg decrypts to a novel plaintext or when decryption happens out of order. This is checked by the \textbf{if}-clause in line 5. The \RCV oracle checks whether the plaintext returned on the ($\idx[r]+1$)-th successful decryption call is \emph{not} equal to the plaintext entry in \msgStore at index \idx[r].

By definition of the security game, \gls{mac} forgeries or out-of-order deliveries are not \emph{directly} considered winning events. However, when such an event happens and $\challbit = 1$, $\adv$ gets precise information about \challbit and can thus maximize its advantage. Whenever a plaintext $\msg \neq \bot$ is returned, $\adv$ \emph{knows} that $\challbit=1$ and can set $\challbitp=1$. This is why the last, seemingly strange, check is implemented in line 10 of \RCV. The benefit of this advantage-based security notion is that our game can easily be extended to a full game for channel security, also covering $\indccaii$ security (cf.\ \cref{lst:model2} in the Appendix), without changing the syntax of the model. This follows a line of research in channel security initiated by~\cite{AC:PatRisShr11,C:JKSS12,C:KraPatWee13}.

\subsection{Restrictions}
\label{subsec:restrictions}

\subsubsection{Application Layer.}

We do not model the application-layer protocols to which the plaintexts of the secure channel are delivered. Any novel or out-of-order plaintext $\msg'$ accepted by \RCV constitutes a successful attack, regardless of whether it is a valid application-layer message. This is why some ``unsuccessful'' attacks on \gls{ssh} in the Terrapin paper are successful in our model---the Terrapin authors \emph{did} take the application layer of \gls{ssh} into account.

\subsubsection{IND-CCA2.}

Comparing \cref{lst:model} and \cref{lst:model2}, it is clear that we are \emph{not} considering $\indccaii$ security in this paper. The reasons are twofold:
\begin{enumerate}
    \item Our goal is to determine unambiguously whether an \gls{ssh} cipher mode is \emph{secure} against the Terrapin attack. Including $\indccaii$ security in our security model would make this determination no longer possible. For example, \gls{eam}-CBC is secure against the Terrapin attack but not $\indccaii$-secure, as shown in~\cite{SP:AlbPatWat09}. Consequently, we would have to label it as ``insecure''.

    \item The $\indccaii$ status of some cipher modes is not yet known. While some results have been presented in~\cite{EC:PatWat10} and~\cite{CCS:ADHP16}, not all modes are covered yet. This would have required the category ``unknown'' in the classification.
\end{enumerate}

\subsection{Differences to Formal Models in Related Work}
\label{subsec:differences}

Bellare, Kohno, and Namprempre~\cite{CCS:BelKohNam02} model the indistinguishability of ciphertexts and the integrity of plaintexts/ciphertexts in two separate security games. We use an idea from Paterson, Ristenpart, and Shrimpton~\cite{AC:PatRisShr11}, who combine the two parts into one advantage-based experiment (\cref{lst:model2}), except that we omit the \indccaii\xspace component in \cref{lst:model} to focus on the Terrapin attack.

We do not follow~\cite{AC:PatRisShr11} in allowing the adversary to control the amount of padding, since many libraries do not expose this functionality to the application layer. Thus, the $\ell$ parameter has a different meaning in our model---it specifies the length of a plaintext, not of the ciphertext.

Several models use different levels of checks on the ordering of messages. Rogaway and Zhang~\cite{C:RogZha18}, Boyd et al.~\cite{RSA:BHMS16}, and Kohno, Palacio, and Black~\cite{EPRINT:KohPalBla03} define four levels: $L_1$ (no message injections), $L_2$ (no replays), $L_3$ (no reordering), and $L_4$ (receiving order must match sending order). A fifth level, $L_5$, defined in the appendix of~\cite{EPRINT:KohPalBla03}, differs from $L_4$ only in the ability of the adversary to send decrypt queries after a failed attempt, so $L_5$ matches theoretical models~\cite{AC:PatRisShr11}, while $L_4$ better describes \gls{ssh} in practice. As our goal is to determine whether a given stateful encryption scheme is vulnerable to Terrapin, we implement $L_4$ in our model. By removing the failure flag $\rcvfailflag$, the model can be converted to an $L_5$ model. In~\cite{C:RogZha18}, a new approach is proposed to unify security models for stateful encryption. This is motivated by the complex and often counterintuitive security definitions for the different channel security levels presented in the papers cited above. Rogaway and Zhang use correctness definitions of the different channels to define the notion of \emph{indistinguishability up to correctness}. We hope that by focusing exclusively on \gls{tcp}-based secure channels, the complexity of our model remains manageable.

Jager~et~al.~\cite{C:JKSS12} use a multi-instance stateful variant of the model from~\cite{AC:PatRisShr11} to model \gls{acce}. This is justified because in \gls{tls}, many stateful encryption channels are instantiated in parallel and an active adversary may exploit this.

\subsection{Additional Related Work}
\label{sec:rw}
\textbf{\glsfmtshort{aead}.} Katz and Yung~\cite{FSE:KatYun00} proposed one of the first \gls{aead} modes. Generic constructions for \gls{aead} were investigated by Krawczyk~\cite{C:Krawczyk01}. The IND-CCA3 definition from Shrimpton~\cite{EPRINT:Shrimpton04} is a predecessor of the unified security definition for \gls{aead} given by Paterson, Ristenpart, and Shrimpton~\cite{AC:PatRisShr11}. \textbf{\glsfmtshort{acce}.} In the analysis of all three cipher suite families from \gls{tls}~1.2, Krawczyk, Paterson, and Wee~\cite{C:KraPatWee13} and Jager~et~al.~\cite{C:JKSS12,JC:JKSS17} use the \gls{acce} model, which combines authenticated key agreement with stateful \gls{aead}. \textbf{Data Streams.} Fischlin~et~al.~\cite{C:FGMP15} refined the stateful channel security model further to the data \emph{streams} sent and received by the application layer, which would be partitioned into packets solely by the sender's buffer. They gave the adversary access to a \emph{flush} interface, allowing the adversary to artificially partition the application data stream into packets. \textbf{Constructions.} The above works provide abstract formal security games to analyze the security of existing real-world implementations of protocols like \gls{ssh} and \gls{tls}. Attacks on these protocols revealed that the abstract security model did not cover all subtleties of the protocol implementations. Subsequently, models were adapted to the protocol implementations (\gls{tls}: Paterson, Ristenpart, and Shrimpton~\cite{AC:PatRisShr11}; \gls{ssh}: Albrecht, Paterson, and Watson~\cite{SP:AlbPatWat09}, as well as Paterson and Watson~\cite{EC:PatWat10}), or implementations were adapted to models after attacks (\gls{ssh}: Bäumer, Brinkmann, and Schwenk~\cite{USENIX:BauBriSch24}, Miller~\cite{ietf-sshm-strict-kex-01}). Delignat-Lavaud et. al \cite{SP:DFKPRS17} modeled different TLS 1.3 AEAD modes; their model does not include channel state.

\section{Channel State and \glsfmtshort{aead} Cipher Modes}
\label{sec:modes}
\begin{table*}[t]
    \renewcommand{\crefpairconjunction}{, }
    \centering
    \begin{threeparttable}
    \caption{Supported stateful encryption schemes and their parameters in \glsfmtshort{ssh}. The columns \glsfmtshort{co}, \glsfmtshort{kpa}, and \glsfmtshort{cpa} indicate if a given scheme is secure in the corresponding security model. In \glsfmtshort{ssh}, each connection consists of two unidirectional channels, each with a separate state and keying material. The adversary-chosen state component is~$\advst$, and the secure state component is~$\intst$. For \glsfmtshort{gcm}, \sqn is maintained but not used.}
    \label{tab:supportedaead}
    \begin{tabular}{l>{\centering\columncolor{gray!20}}p{\widthof{unaffected}}ccccc>{\columncolor{blue!15}}c>{\columncolor{blue!15}}c>{\columncolor{blue!15}}cc}
    \toprule
    \thead[l]{SSH Cipher} & \thead[b]{\cite{USENIX:BauBriSch24}\\(informal)} & \thead{$\sfHeader$} & \thead{\st.\advst} & \thead{\st.\intst} & \thead[l]{Key Derivation} & \thead{Fig.} & \thead{\glsfmtshort{co}} & \thead{\glsfmtshort{kpa}} & \thead{\glsfmtshort{cpa}} & \thead{Proofs}\\[4pt]
     \hline\rule{0pt}{3ex}
     \gls{eam}-\gls{cbc} &unaffected&  $-$ & \sqn & \IV  & \kenc, \kauth, \IV[\mathsf{init}] & \labelcref{lst:sfSshEaM,fig:sfSshEaM} & \yes & \yes & \no & \labelcref{subsec:cpaeam2,subsec:cpaeam4} \\
     \gls{eam}-\gls{ctr} &unaffected& $-$ & \sqn & \ctr  & \kenc, \kauth, \ctr[\mathsf{init}] & \labelcref{lst:sfSshEaM,fig:sfSshEaM} & \yes & \no & \no & \labelcref{subsec:cpaeam1,subsec:cpaeam3} \\
     \gls{eam}-Stream & $-$ & $-$ & \sqn & \pos & \kenc, \kauth & \labelcref{lst:sfSshEaM,fig:sfSshEaM} & \yes\tnote{1} & \no & \no & \labelcref{subsec:cpaeam1,subsec:cpaeam3} \\[4pt]
     \gls{etm}-\gls{cbc} &affected& \len{m'} & \sqn & \IV & \kenc, \kauth, \IV[\mathsf{init}] & \labelcref{lst:sfSshEtM,fig:sfSshEtM} & \no & \no & \no & \labelcref{subsec:etmattack} \\
     \gls{etm}-\gls{ctr} &affected& \len{m'} & \sqn & \ctr & \kenc, \kauth, \ctr[\mathsf{init}] & \labelcref{lst:sfSshEtM,fig:sfSshEtM} & \no & \no & \no & \labelcref{subsec:etmattack}  \\
     \gls{etm}-Stream & $-$ & \len{m'} & \sqn & \pos  & \kenc, \kauth & \labelcref{lst:sfSshEtM,fig:sfSshEtM} & \no  & \no  & \no & \labelcref{subsec:etmattack}  \\[4pt]
     \gls{gcm} & unaffected& \len{m'} & \color{gray}\sqn & \ictr & \key, \salt, \ictr[\mathsf{init}] & \labelcref{lst:sfAnyXcm,fig:sfSshGcm} & \yes & \yes & \yes & \labelcref{subsec:gcmproof} \\
     CCP\tnote{2} &affected& $-$ & \sqn & $-$ & $\key = (\key_2, \key_1)$ & \labelcref{lst:sfAnyCcp,fig:ccp} & \no & \no & \no & \labelcref{subsec:genattack} \\
     \bottomrule
    \end{tabular}
    \begin{tablenotes}
        \item [1] Unknown for RC4
        \item [2] ChaCha20-Poly1305 as stateful encryption, SSH variant
    \end{tablenotes}
    \end{threeparttable}
\end{table*}

In this section, we analyze the stateful encryption schemes supported by \gls{ssh} and how they construct the \gls{aead} input \aeadNonce and \aeadAD from the message~\msg and the channel state~\st. Furthermore, we detail the structure of the authenticated header~\sfHeader. An overview is given in \cref{tab:supportedaead}.

\subsection{Preliminaries}

\subsubsection{Associated Data and Channel State.}

As \cref{tab:supportedaead} shows, for all described stateful encryption schemes in SSH, \emph{channel state} always consists of the sequence numbers as an adversary-chosen state component and may have a non-empty secure state component depending on the underlying \gls{aead} mode. Thus, channel state may influence \gls{mac} validation directly (when included in \aeadAD) or indirectly (for \intptxt security, by changing the decryption). The inclusion method may differ; e.g., a sequence number may either be concatenated with the packet header (\gls{eam}, \gls{etm}), or used with nonce stealing (ChaCha20-Poly1305).

Historically, after the dissolution of the original secsh working group and before the formation of the sshm working group by the \gls{ietf}, \gls{ssh} development was primarily driven by OpenSSH. Of the eight commonly supported stateful encryption schemes in \gls{ssh}, only half are described in \glspl{rfc}; these include the three generic modes using \gls{eam} constructions described in~\cite{rfc4253,rfc4344}. The other half, including the three generic modes using \gls{etm} constructions, are described in separate documents~\cite{deviationsossh,ietf-sshm-chacha20-poly1305-04}.

\subsection{\glsfmtlong{eam}}

The structure of this encryption mode is illustrated in \cref{fig:sfSshEaM}, and the pseudocode translation from stateful encryption to \gls{aead} in \cref{lst:sfSshEaM}.

\begin{figure}
    \centering
    \includegraphics[width=0.7\columnwidth]{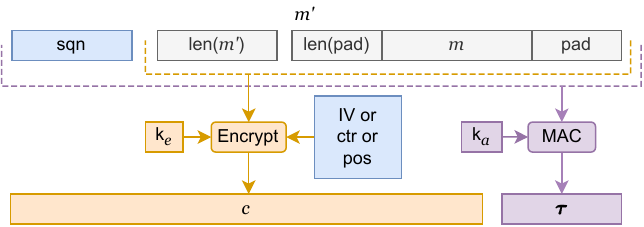}
    \caption{Encryption of a binary packet in \glsfmtshort{ssh} using a generic \glsfmtlong{eam} construction~\cite{rfc4253}.}
    \label{fig:sfSshEaM}
\end{figure}

\begin{figure*}
    \include{img/cryptocode/sf-ssh-eam}
    \caption{Stateful encryption schemes using a generic \glsfmtshort{eam} construction in \glsfmtshort{ssh}~\cite{rfc4253}. For block ciphers in \glsfmtshort{cbc} mode, the \glsfmtshort{iv} becomes the secure state because \glsfmtshort{ssh} uses \glsfmtshort{iv} chaining. For block ciphers in \glsfmtshort{ctr} mode, the counter counts the input blocks of the underlying block cipher and therefore becomes secure state. For stream ciphers, the channel state includes the internal cipher state of the stream cipher, that is, the position in the keystream. The only associated data is the sequence number~\sqn as the adversary-chosen state component, so the header~$\sfHeader$ is empty.}
    \label{lst:sfSshEaM}
\end{figure*}

\subsubsection{\glsfmtshort{cbc}.}

In \gls{ssh}, when using the \gls{eam} construction with a \gls{cbc} block cipher (\cite[Sec. 6.3]{rfc4253}), the encoded plaintext $\msg'$ consists of the packet length, the padding length, the message $\msg$, and the random padding bytes. Consequently, the header~$\sfHeader$ returned by the stateful encryption is empty.

\gls{ssh} uses \gls{iv} chaining: the initial value $\IV[\mathsf{init}]$ is generated during key derivation, and after each encryption $\IV$ is set to the last block of the previous ciphertext. The channel state thus consists of the sequence number as adversary-chosen and the \gls{iv} as secure state. The \gls{mac}~$\aeadTag$ is computed over the encoded plaintext $\msg'$, prepended with the sequence number $\sqn$.

If the wrong sequence number is used for decryption, \gls{mac} verification and thus the stateful decryption operation overall will fail. If the wrong \gls{iv} is used for stateful decryption, the plaintext produced by the block cipher will change. Since the \gls{mac} is computed over the plaintext, \gls{mac} verification, and thus stateful decryption, will also fail.

\subsubsection{\glsfmtshort{ctr}.}

In \gls{ssh}, when using the \gls{eam} construction with a \gls{ctr} block cipher (\cite[Sec. 4]{rfc4344}), the counter used for encryption has the size of the block length \blklen (128 bits for \gls{aes}). Since the counter is never reset under the same key material, it forms the secure state component. The initial value of the counter is generated through key derivation similar to the initial value of the \gls{iv} in \gls{cbc} mode, that is, $\ctr[\mathsf{init}] = \IV[\mathsf{init}]$. After each block, the corresponding counter is incremented.

\subsubsection{Stream.}

\gls{ssh} supports three algorithms based on the \gls{rc4} stream cipher: \texttt{arcfour}~\cite{rfc4253} is classical \gls{rc4} with a key size of 128 bits; \texttt{arcfour128} and \texttt{arcfour256}~\cite{rfc4345} are variants with key sizes of 128 and 256 bits; additionally, the first 1536 bytes of output from the \gls{prg} are dropped to avoid the bias in these bytes. All of these modes can be used in \gls{eam} mode~\cite[Sec. 6.3]{rfc4253}, with only minor adjustments to account for the different channel state. In our formal description, the latter two modes can be described by initializing $\pos$ to $1536$ rather than zero, effectively skipping the first 1536 bytes. Although \gls{rc4} as a stream cipher does not require padding, the $\padssh{\msg}$ function still pads $\msg'$ to a multiple of 8 bytes, as mandated by the \gls{ssh} specification~\cite[Sec. 6]{rfc4253}. The stream cipher is never re-initialized for the lifetime of the key material, meaning that the current keystream position \pos becomes part of the channel state.

\subsection{\glsfmtlong{etm}}

\Gls{ssh} implementations widely support \gls{etm} as a vendor extension by OpenSSH. Here, the packet length is no longer encrypted but is instead, together with the sequence number, authenticated as part of the \gls{mac} input. The structure of this encryption mode is illustrated in \cref{fig:sfSshEtM}, and the pseudocode translation from stateful encryption to \gls{aead} in \cref{lst:sfSshEtM}. The plaintext $\msg'$ to be encrypted consists of the padding length field, the payload \msg, and the random padding.

The \gls{mac}~$\tau$ is computed over the concatenation of the sequence number \sqn, the packet length $\len{\msg'}$, and the ciphertext \cmsg. These modifications are briefly described in~\cite[Sec. 1.5]{deviationsossh}, while other aspects remain unchanged. For example, \gls{iv} chaining is used in \gls{cbc} mode and the counter is never reset in \gls{ctr} mode. Thus, the channel state for each mode is identical to that of its \gls{eam} counterpart.

\begin{figure}
    \centering
    \includegraphics[width=0.7\columnwidth]{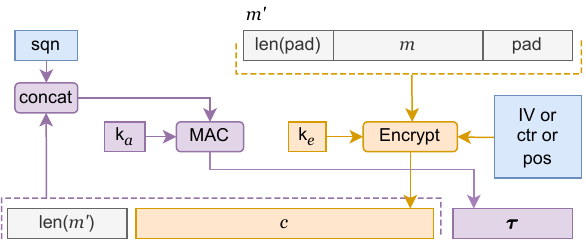}
    \caption{Encryption of a binary packet in \glsfmtshort{ssh}, using generic \glsfmtlong{etm}~\cite{deviationsossh}.}
    \label{fig:sfSshEtM}
\end{figure}

\begin{figure*}
    \include{img/cryptocode/sf-ssh-etm}
    \caption{Stateful encryption schemes using a generic \glsfmtshort{etm} construction in \glsfmtshort{ssh}~\cite{deviationsossh}. For block ciphers in \glsfmtshort{cbc} mode, the \glsfmtshort{iv} is part of the channel state because \glsfmtshort{ssh} uses \glsfmtshort{iv} chaining. For block ciphers in \glsfmtshort{ctr} mode, the counter counts the input blocks of the underlying block cipher and is therefore part of the channel state. For stream ciphers, the channel state includes the internal cipher state of the stream cipher, that is, the position in the keystream.}
    \label{lst:sfSshEtM}
\end{figure*}

\subsection{\glsfmtlong{gcm}}
\label{subsec:055_aes-gcm}

According to \gls{rfc}~5647~\cite{rfc5647}, in \gls{ssh} the \gls{aead} mode \gls{gcm} requires a secret key~$\key$, a nonce~\aeadNonce (which defines the start value of the internal counter), associated data~\aeadAD, and plaintext \msg. According to the general \gls{ssh} rules, the plaintext must be padded to a multiple of the block length of \gls{aes}. The output is a sequence of ciphertext blocks~\cmsg and an authentication tag~\aeadTag. The nonce~\aeadNonce consists of 4~fixed bytes~\salt computed during key derivation and 8~bytes for the invocation counter~\ictr; the remaining 4~bytes are reserved for the block counter, which starts at 1 (\cref{fig:sfSshGcm}, Fig.~\ref{lst:sfAnyXcm} line~2). \aeadAD includes \sfHeader, i.e., the packet length of the binary packet. In this mode, the packet length is not encrypted, see \cref{fig:sfSshGcm}. The channel state includes the implicit sequence numbers; however, \gls{ssh} with \gls{gcm} does not use the sequence numbers in any cryptographic computation (\cref{fig:sfSshGcm,lst:sfAnyXcm}); they are still incremented to allow their use with another \gls{aead} mode after a possible key re-exchange. Instead, \gls{ssh} uses the 8-byte invocation counter~\ictr that is incremented after each encryption of a packet.

\begin{figure}
    \input{img/cryptocode/sf-any-gcm}
    \caption{Stateful encryption scheme using \glsfmtshort{gcm} in \glsfmtshort{ssh}~\cite{rfc5647,miller-sshm-aes-gcm-01}. The sequence number is maintained but not used, denoted in gray color.}
    \label{lst:sfAnyXcm}
\end{figure}

\begin{figure}
    \centering
    \includegraphics[width=0.7\columnwidth]{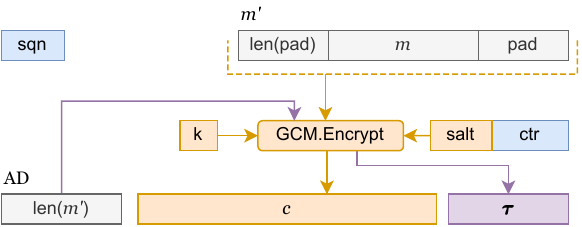}
    \caption{Encryption of a binary packet in \gls{ssh} using \glsfmtshort{gcm}~\cite{rfc5647,miller-sshm-aes-gcm-01}.}
    \label{fig:sfSshGcm}
\end{figure}

\subsection{ChaCha20-Poly1305}
\label{subsec:057_chacha20-poly1305}

ChaCha20-Poly1305~\cite{rfc8439} is a combination of the ChaCha20 stream cipher (re-initialized for each binary packet) and the \gls{mac} Poly1305. ChaCha20, like its predecessor Salsa20, generates a keystream consisting of 512-bit blocks from a $4\times 4$ matrix of 32-bit words using add-XOR-rotate operations. To produce one block of keystream, the ChaCha20 matrix is initialized with a 256-bit key~$\kenc$, a 128-bit constant, a 64-bit nonce~$\aeadNonce$, and a 64-bit block counter~\ctr initialized to 0. An \gls{ietf}~\cite{rfc8439} variant with a 96-bit nonce and a 32-bit block counter exists, but is not implemented by \gls{ssh}.

In an attempt to reconcile the \gls{ssh} \gls{bpp} specification~\cite{rfc4253} with the attack presented in~\cite{SP:AlbPatWat09}, the design of ChaCha20-Poly1305 for \gls{ssh} incorporates a separately keyed ChaCha20 instance to encrypt the packet length field. This difference is illustrated in \cref{fig:ccp}. The left-zero-padded 4-byte \gls{ssh} sequence number is used directly as a nonce.

ChaCha20 is first initialized with $\key_2$ to produce a single 512-bit block, of which 4~bytes are used to encrypt the length field via an XOR operation. Since a different keystream is used here, this mitigates the attack from~\cite{SP:AlbPatWat09}, while also allowing the packet length to be encrypted. In the second step, ChaCha20 is used in combination with Poly1305 as an \gls{aead} cipher. The previously encrypted 4-byte length field is used as associated data~\aeadAD.

\begin{figure}
    \input{img/cryptocode/sf-any-ccp}
    \caption{Stateful encryption scheme using ChaCha20-Poly1305 in \glsfmtshort{ssh}. A separately keyed ChaCha20 instance encrypts the packet length to comply with \glsfmtshort{rfc}~4253~\cite{rfc4253}, distinguishing it from the \glsfmtshort{etm} and \glsfmtshort{gcm} schemes.}
    \label{lst:sfAnyCcp}
\end{figure}

\begin{figure}
    \centering
    \includegraphics[width=0.7\columnwidth]{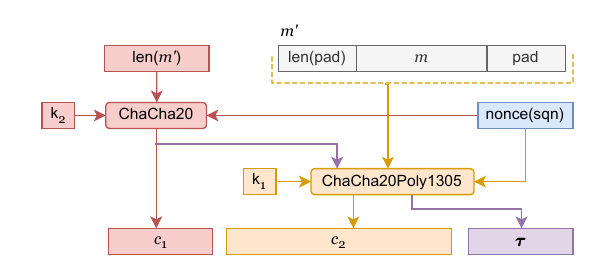}
    \caption{Encryption of one binary packet using ChaCha20-Poly1305 in \glsfmtshort{ssh}~\cite{ietf-sshm-chacha20-poly1305-04}. The construction requires a 512-bit key that is split into two 256-bit keys $\key_1$ and $\key_2$.}
    \label{fig:ccp}
\end{figure}

\section{Security Analysis}
\label{sec:analysis}
\emph{All} models for secure channels from related work (\cref{subsec:differences,sec:rw}) provide the adversary with chosen-plaintext privileges. However, Terrapin is not a \gls{cpa} attack---with chosen-plaintext privileges, schemes secure against Terrapin become vulnerable. We therefore analyze the security of the \gls{ssh} stateful encryption schemes in three slightly different security models, which are selected by choosing one of the three oracles \SNDCO, \SNDKPA, \SNDCPA.

The weakest security model, and the one modeling the Terrapin attack, is the one using \SNDCO. If an \gls{aead} scheme is insecure in this model, it is automatically insecure in the two stronger models. This is the case for the three \gls{etm} variants and for ChaCha20-Poly1305. Therefore, proving insecurity in the weakest model for these schemes as done in \cref{subsec:genattack,subsec:etmattack} is sufficient.

The strongest of our security models is the \gls{cpa} model. If we can prove that a cipher is secure in this model, then it is automatically secure in the two weaker models. This is the case for \gls{gcm}, and the proof in \cref{subsec:gcmproof} is straightforward: \gls{gcm} does not use sequence numbers at all but instead \emph{invocation counters} which are (likely) perfectly aligned with our bookkeeping indices $\idx[s], \idx[r]$.

The most interesting class of stateful encryption schemes is \gls{eam}. Here, it is relatively straightforward to show that \gls{eam}-\gls{ctr} and \gls{eam}-Stream are insecure in the \gls{kpa} setting (\cref{subsec:cpaeam1}), which also implies insecurity in the \gls{cpa} setting. A proof of their security in the \gls{co} setting, which depends on the security of the \gls{prg} used to generate the keystream, can be found in \cref{subsec:cpaeam3}. As \gls{rc4}'s \gls{prg} may be distinguishable from a random one~~\cite{FSE:ManSha01}, all concrete instantiations of \gls{eam}-Stream remain unproven. This leaves us with \gls{eam}-\gls{cbc}, the most interesting mode. In \cref{subsec:cpaeam2}, we show that this mode is \emph{not} \intpcsacpa-secure. The attack methodology is similar to the \gls{beast} attack on \gls{tls}~\cite{rizzo2011beast}, but has a different goal: instead of breaking confidentiality, we attempt to control $\st[r].\IV$ of the channel state. Finally, in \cref{subsec:cpaeam4} we show that \gls{eam}-\gls{cbc} is \intpcsakpa-secure: breaking \gls{kpa} security is equivalent to either forging a \gls{mac} or inverting the block cipher without the key.

\subsection{Insecurity of ChaCha20-Poly1305 under \glsfmtshort{co}}
\label{subsec:genattack}

In the security game described in \cref{lst:model}, stateful encryption schemes whose secure state is empty are vulnerable to a generic attack if their adversary-chosen state is predictable. In the context of \gls{ssh}, the attack described below applies to ChaCha20-Poly1305 (cf. \cref{tab:supportedaead}).

\begin{definition}[Deterministic adversary-chosen state]
    Let $\sfAlg$ be a stateful encryption scheme with state $\st_i = (\advst_i,\intst_i)$ after the $i$-th invocation of \sfenc. We say that $\sfAlg$ has a \emph{deterministic adversary-chosen state} if there exists a publicly known deterministic polynomial-time function $f_{\mathsf{adv}} : \mathcal{S}_{\mathsf{adv}} \to \mathcal{S}_{\mathsf{adv}}$ such that, for every call to \sfenc and every $i \geq 0$, $\advst_{i+1} = f_{\mathsf{adv}}(\advst_i).$ Thus, given $\advst_i$, an adversary can efficiently determine all subsequent adversary-chosen states without the need to set them first.
\end{definition}

\begin{theorem}
    \label{th:onlyext}
    Let \sfAlg be a stateful encryption scheme where the adversary-chosen state~$\advst$ is deterministic and $\intst = \varepsilon$. Then \sfAlg is \emph{not} \intpcsaco-secure.
\end{theorem}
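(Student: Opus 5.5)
We prove the theorem by constructing an explicit Terrapin-style prefix-truncation adversary $\adv$ against the \intpcsaco game of \cref{lst:model} and showing that its advantage is $\sfrac{1}{2}$ (up to negligible terms). The key observation is that when $\intst = \varepsilon$, the entire state consumed by \sfenc and \sfdec is the adversary-chosen component $\advst$. Moreover, since $\advst$ evolves deterministically via $f_{\mathsf{adv}}$, $\adv$ can compute the sender's state at every future step. It can then use the \SET oracle to place the receiver directly in a state the sender will reach only later.

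\textbf{The adversary.} Let $\advst_0$ be the initial adversary-chosen state (publicly known; $\sqn = 0$ in all instantiations considered here). Before any \SND or \RCV call, $\adv$ invokes the receiver-side \SET oracle with $\advst' = f_{\mathsf{adv}}(\advst_0)$; this is allowed because the flag \setflag[r] is still unset. $\adv$ does not touch the sender's state. Next, $\adv$ queries \SNDCO twice with some fixed length $\ell$ and obtains $(\sfHeader_1,\cmsg_1)$ under state $\advst_0$ and $(\sfHeader_2,\cmsg_2)$ under state $f_{\mathsf{adv}}(\advst_0)$. It drops the first pair and submits $(\sfHeader_2,\cmsg_2)$ to \RCV. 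If \RCV returns a message $\msg \neq \bot$, $\adv$ outputs $\challbitp = 1$; otherwise it outputs $\challbitp = 0$.

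\textbf{Analysis.} Since $\intst = \varepsilon$, the receiver's full state $\st[r] = (f_{\mathsf{adv}}(\advst_0),\varepsilon)$ equals the sender's state at the time $\cmsg_2$ was produced. By correctness of \sfAlg, $\sfdec$ therefore returns $\msg_2$. Two claims then need checking:
\begin{enumerate}
\item The game treats this as a winning event. Since $\idx[r] = 0$, the oracle compares $\msg_2$ with $\msgStore[0] = \msg_1$. These differ except with probability $\Pr[\msg_1 = \msg_2]$, and because \SNDCO samples the messages uniformly at random, this probability is $2^{-8\ell}$, which is negligible for $\ell = \Omega(\secpar)$.
\item When $\challbit = 1$, \RCV returns $\msg_2$; when $\challbit = 0$, it returns $\bot$.
\end{enumerate}
Hence $\Pr[\challbitp = \challbit] \ge \tfrac12 + \tfrac12(1-2^{-8\ell})$, and the advantage is $\tfrac12 - \text{negl}$, which is non-negligible.

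\textbf{Main obstacle.} The only delicate step is the correctness claim: we must argue that decryption depends on nothing beyond $(\key,\st[r],\sfHeader,\cmsg)$, and that the sender's state after one encryption is exactly $(f_{\mathsf{adv}}(\advst_0),\varepsilon)$. This requires the definition of a deterministic adversary-chosen state, which fixes $\advst_1$ regardless of the message, together with $\intst = \varepsilon$, which rules out any hidden secure state that could desynchronize the two parties. For ChaCha20-Poly1305 (\cref{lst:sfAnyCcp}), the check is immediate: the nonce is $\sqn$ and $f_{\mathsf{adv}}(\sqn) = \sqn+1$. The remaining care is in matching the exact \RCV pseudocode conditions, namely that the failure flag \rcvfailflag is not set and that the index $\idx[r]$ is $0$ at the moment of the query.
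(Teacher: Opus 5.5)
Your adversary is the paper's adversary: two \SNDCO queries, discard the first ciphertext, shift the receiver with $\SET_r(f_\mathsf{adv}(\cdot))$, deliver the second ciphertext, and output $\challbitp=1$ iff \RCV returns a non-$\bot$ value. This gives advantage $\frac12 - 2^{-(8\ell+1)}$.

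The one real difference is in the first step. The paper's adversary samples its own $\advst$ and calls $\SET_s(\advst)$ before any \SND query, so it never needs to know the initial adversary-chosen state produced by \sfsetup. Your version assumes $\advst_0$ is public ($\sqn = 0$). That holds for ChaCha20-Poly1305 and the other instantiations, and the paper's own EtM proof makes the same simplification. It is not, however, a hypothesis of the theorem. For the general statement, add $\SET_s(x)$ for a known $x$ before the first \SND query and then use $\SET_r(f_\mathsf{adv}(x))$.

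Finally, you do not need $\ell = \Omega(\secpar)$: any $\ell \geq 1$ already gives advantage at least $\frac12 - 2^{-9}$, which is non-negligible.
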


\begin{proof}
We construct a \gls{ppt} adversary~$\adv$ against the \intpcsaco game of \cref{lst:model} that proceeds as follows:
\begin{enumerate}
    \item $\adv$ samples $\advst \sample \mathcal{S}_\mathsf{adv}$ and invokes $\SET_r(\advst)$ and $\SET_s(\advst)$, thereby the initial state is $\advst_0 = \advst$.

    \item $\adv$ queries $\SNDCO(\ell)$ twice for some length $\ell \geq 1$. The challenger samples and stores two messages $\msgStore[0]$ and $\msgStore[1]$, thereby advancing $\st[s]$ twice, and returns $(\sfHeader_1, \cmsg_1)$ and $(\sfHeader_2, \cmsg_2)$ to $\adv$. As both messages are independent and uniformly distributed, $\msgStore[0] \neq \msgStore[1]$ except with probability $2^{-8\ell}$.

    \item $\adv$ calls $\SET_r(f_\mathsf{adv}(\advst))$; this is permitted as \RCV has not yet been called. As $\advst$ is deterministic, $f_\mathsf{adv}$ exists and is known to the adversary.

    \item $\adv$ calls $\RCV(\sfHeader_2, \cmsg_2)$. Inside \sfdec, the receiver state~$\st[r]$ matches the corresponding sender state~$\st[s]$ used to encrypt $\cmsg_2$, as $\intst = \varepsilon$ and $\advst$ was deliberately set in the previous step. Consequently, $\cmsg_2$ decrypts to $\msgStore[1]$. Since $\idx[r] = 0$, the oracle compares the result against $\msgStore[0]$; as $\msgStore[1] \neq \msgStore[0]$, it sets $\res = \msgStore[1] \neq \bot$.

    \item $\adv$ outputs $\challbitp = 1$ if \RCV returns a value other than $\bot$, and $\challbitp = 0$ otherwise. For $\challbit = 1$ this yields $\challbitp = 1$ whenever $\msgStore[0] \neq \msgStore[1]$, i.e.\ with probability $1 - 2^{-8\ell}$; for $\challbit = 0$ the oracle always returns $\bot$, so $\challbitp = 0$ with certainty. Hence,
    $
        \advantage{\intpcsaco}{\adv} \geq \frac{1}{2} - 2^{-(8\ell+1)}
    $
    is non-negligible.
\end{enumerate}
\end{proof}

\begin{corollary}
    $\sfAlg^\ssh_\mathsf{CCP}$ is not \intpcsaco-secure.
\end{corollary}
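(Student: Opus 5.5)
The corollary follows by applying \cref{th:onlyext} to $\sfAlg^\ssh_\mathsf{CCP}$, so the plan is to check the theorem's two hypotheses against the pseudocode in \cref{lst:sfAnyCcp} and the entry in \cref{tab:supportedaead}. First, we show that the secure state is empty. Then we show that the adversary-chosen state $\advst = \sqn$ is deterministic in the sense of the preceding definition.

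For the first hypothesis, we inspect \cref{lst:sfAnyCcp}. The ChaCha20 instance keyed with $\key_2$ and the ChaCha20-Poly1305 \gls{aead} instance keyed with $\key_1$ are both re-initialized for every binary packet. Their inputs are only the key, the nonce $\aeadNonce$ (the left-zero-padded \sqn), and a block counter that restarts at $0$ for each packet. No keystream position, \gls{iv}, or invocation counter is carried from one call of \sfenc to the next. The only state that survives between invocations is \sqn, so $\intst = \varepsilon$, as recorded in \cref{tab:supportedaead}.

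For the second hypothesis, after each encryption the state update is $\sqn \gets \sqn + 1$, reduced modulo $2^{32}$ since \gls{ssh} sequence numbers are 32-bit and wrap around. Taking $f_{\mathsf{adv}}(x) = x + 1 \bmod 2^{32}$ gives a public, deterministic, polynomial-time function with $\advst_{i+1} = f_{\mathsf{adv}}(\advst_i)$. The adversary can therefore compute the sender's state after one \SNDCO call and hand it to $\SET_r$, exactly as in step 3 of the proof of \cref{th:onlyext}.

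The only point needing care is correctness. After $\SET_r$ is applied, decrypting $(\sfHeader_2,\cmsg_2)$ must really succeed, and this depends only on $(\key, \sqn)$. The length field is decrypted under the $\key_2$-keystream derived from the nonce \sqn. The Poly1305 key and the payload keystream are likewise functions of $(\key_1, \sqn)$. Hence matching \sqn values mean matching decryption. This is already covered by the theorem's argument once $\intst=\varepsilon$ holds, so no separate calculation is needed. Invoking \cref{th:onlyext} then gives an adversary with advantage at least $\tfrac12 - 2^{-(8\ell+1)}$, which proves the corollary.
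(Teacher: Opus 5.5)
Your proposal is correct and follows the paper's proof: you check that $\intst = \varepsilon$ and that $\advst = \sqn$ evolves deterministically via $f_{\mathsf{adv}}(x) = x + 1 \bmod 2^{32}$, then invoke \cref{th:onlyext}. Your justification of the empty secure state from the pseudocode (both ChaCha20 instances are re-initialized for every packet) adds detail that the paper simply reads off \cref{tab:supportedaead}.
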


\begin{proof}
    By \cref{tab:supportedaead}, the scheme's adversary-chosen state consists solely of the sequence number ($\advst = \sqn$) and the secure state is empty ($\intst = \varepsilon$). The sequence number is deterministic by $f_\mathsf{adv}(\advst) \equiv \advst + 1 \bmod 2^{32}$ and the corollary follows directly from \cref{th:onlyext}.
\end{proof}

\subsection{Insecurity of \glsfmtlong{etm} under \glsfmtshort{co}}
\label{subsec:etmattack}

For \gls{etm} modes, the state is the adversary-chosen sequence number~$\sqn$ together with additional secure state: a chained \gls{iv} (\gls{cbc}), a counter (\gls{ctr}), or a keystream position (Stream). As the secure state is no longer empty, the generic attack of \cref{subsec:genattack} does not apply directly. The same attack nonetheless breaks \intpcsaco security. As shown in \cref{lst:sfSshEtM}, the tag is computed over the sequence number, the packet length, and the ciphertext---that is, over $\sqn \concat \len{\msg'} \concat \cmsg$; the secure state never enters the \gls{mac}. Leaving the ciphertext unchanged and only shifting~\sqn via $\SET_r$ therefore keeps the \gls{mac} valid, so the adversary from \cref{subsec:genattack} applies without change. The secure state is used only by the cipher, affecting only the decrypted plaintext. It is updated by the cipher---using the previous ciphertext block in \gls{cbc}, the number of blocks in \gls{ctr}, or the number of bytes in Stream---so it depends on the earlier traffic, not on~$\sqn$, which only counts packets. The two are therefore independent. If this secure state does not match, the decrypted plaintext is wrong, but the tag still verifies since it covers the ciphertext without the secure state. While such a plaintext may be rejected at the application layer~\cite{USENIX:BauBriSch24}, we exclude this consideration from our model (cf.\ \cref{subsec:restrictions}).

\begin{theorem}
    $\sfAlg^\ssh_{\etmcbc}$, $\sfAlg^\ssh_{\etmctr}$, and $\sfAlg^\ssh_{\etmstr}$ are not \intpcsaco-secure.
\end{theorem}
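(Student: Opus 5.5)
We reuse the adversary $\adv$ from the proof of \cref{th:onlyext} essentially verbatim, with $f_\mathsf{adv}(\sqn) = \sqn + 1 \bmod 2^{32}$. First, $\adv$ calls $\SET_r(\advst)$ and $\SET_s(\advst)$ for some $\advst$, for instance $\advst=0$. It then queries $\SNDCO(\ell)$ twice for a fixed length $\ell \geq 1$, obtaining $(\sfHeader_1,\cmsg_1)$ and $(\sfHeader_2,\cmsg_2)$, where $\sfHeader_i = \len{\msg_i'}$. Next it calls $\SET_r(f_\mathsf{adv}(\advst))$; this is allowed because \RCV has not yet been called. Finally it calls $\RCV(\sfHeader_2,\cmsg_2)$ and outputs $\challbitp=1$ iff the answer is not $\bot$. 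The only place where the proof of \cref{th:onlyext} used $\intst=\varepsilon$ is the claim that decryption succeeds. So the whole task is to replace that step with an argument that works even though the receiver's secure state is out of sync.

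The key step is to show that \sfdec in \cref{lst:sfSshEtM} never rejects in this scenario. By \cref{lst:sfSshEtM}, the decryptor recomputes the tag over $\sqn \concat \len{\msg'} \concat \cmsg$ using $\st[r].\sqn$ and the received header and ciphertext. The sender produced $\aeadTag_2$ with $\sqn = \advst+1$, and we forced $\st[r].\sqn = \advst+1$ via $\SET_r$. Since the \gls{mac} is deterministic, the tag therefore verifies. The secure state $\st[r].\intst$, which is the initial \IV, counter or keystream position, rather than the value after the first packet, enters only the deciphering step. It thus yields some plaintext $\tilde{\msg}'$, and \sfdec returns its payload $\tilde{\msg}$ without error.

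I expect the main obstacle to be a parsing subtlety. \sfdec must strip the padding from $\tilde{\msg}'$, namely the padding length byte, the payload and the random padding, and a garbled padding-length byte could cause rejection. I would first check against \cref{lst:sfSshEtM} whether the pseudocode returns $\bot$ on malformed padding. If it does not, the case is trivial. If it does, I would bound the failure probability. The decrypted first block is an independent, essentially random block: for \gls{cbc} it is $D_\kenc(c_1)\oplus\IV_{\mathsf{init}}$ instead of $\oplus\,\lastBlock{\cmsg_1}$, and for \gls{ctr} and Stream it is the true plaintext XORed with a fresh keystream segment. The padding-length byte is therefore valid with constant probability, for example at least $\len{\msg'}/256$ when we require $4\le n \le \len{\msg'}-1$. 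Alternatively, the adversary can repeat the attack polynomially many times in fresh runs, or choose $\ell$ so that most byte values are acceptable. Either way the success probability stays non-negligible.

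It remains to check the winning condition. Given successful decryption, the oracle compares $\tilde{\msg}$ with $\msgStore[0]$, which is a uniformly random message independent of everything the adversary holds. Hence $\tilde{\msg} \neq \msgStore[0]$ except with probability at most $2^{-8\ell}$. When $\challbit=0$, \RCV always returns $\bot$. Combining these yields an advantage of at least $\tfrac12 p - 2^{-(8\ell+1)}$, where $p$ is the probability that decryption succeeds: $p=1$ if padding is never checked, and $p$ is a constant otherwise. The same argument applies uniformly to $\etmcbc$, $\etmctr$ and $\etmstr$, because in all three the secure state never enters the \gls{mac}.
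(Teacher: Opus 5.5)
Your proposal is correct and follows the paper's proof: the same adversary as in \cref{th:onlyext}, acceptance because the tag covers only $\sqn \concat \len{\msg'} \concat \cmsg$ and never the secure state, and freshness of the plaintext recovered under the stale secure state (your contingency for a padding-length check is an extra precaution the paper does not take). The one step to tighten is freshness: argue it as the paper does, from the freshness of $\msgStore[1]$ (for each fixed $\msgStore[0]$, $\tilde{\msg}$ is a bijective image of the freshly sampled $\msgStore[1]$, hence uniform), rather than from $\msgStore[0]$ being independent of the adversary's view, because $\tilde{\msg}$ is computed by the receiver and, for \gls{cbc}, depends on $\msgStore[0]$ through $\lastBlock{\cmsg_1}$.
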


\begin{proof}
    We reuse the adversary~$\adv$ from the proof of \cref{th:onlyext}, excluding the first step, against the \intpcsaco game of \cref{lst:model} instantiated with one of the \gls{etm} schemes of \cref{lst:sfSshEtM}: $\adv$ queries $\SNDCO(\ell)$ twice, calls $\SET_r(1)$, queries $\RCV(\sfHeader_2, \cmsg_2)$, and decides as before. Its analysis rests on two properties, both of which still hold. The argument is identical for all three modes, which differ only in the additional state.

    \emph{(i) Verification succeeds.} The ciphertext $\cmsg_2$ was produced under $\st[s].\sqn = 1$, so its tag is $\mac(\key_a, 1 \concat \len{\msg_2'} \concat \cmsg_2)$. As $\SET_r(1)$ sets $\st[r].\sqn = 1$ and $\adv$ forwards $(\sfHeader_2, \cmsg_2)$ unchanged, \sfdec recomputes the tag over the identical input, so verification succeeds and returns a plaintext rather than~$\bot$. The additional state does not enter the \gls{mac}, so this is independent of its value.

    \emph{(ii) The accepted plaintext is fresh.} Since $\SET_r$ shifts only $\st[r].\sqn$, the receiver decrypts $\cmsg_2$ under the additional state from \sfsetup rather than the one used to encrypt it, recovering some $\msg^\ast \neq \msgStore[1]$ in general. Decryption of a fixed ciphertext is a bijection, so $\msg^\ast$ is uniform and independent of $\msgStore[0]$, since $\msgStore[1]$ was sampled freshly. With $\idx[r] = 0$, the \RCV oracle thus sets $\res = \msg^\ast \neq \bot$ unless $\msg^\ast = \msgStore[0]$, which occurs with probability~$2^{-8\ell}$.

    Properties (i) and (ii) are what the proof of \cref{th:onlyext} requires, so it carries over and
    $
        \advantage{\intpcsaco}{\adv} \geq \frac{1}{2} - 2^{-(8\ell+1)}
    $
    is non-negligible.
\end{proof}

\subsection{Security of \glsfmtshort{gcm} under \glsfmtshort{cpa}}
\label{subsec:gcmproof}

In \gls{ssh}'s \gls{gcm} mode (\cref{lst:sfAnyXcm}), the sequence number is unused: the nonce is $\aeadNonce = \salt \concat \ictr$ and the associated data is the packet length; \sqn enters neither of them. Consequently, the \SET oracles, which write only \sqn, cannot influence the channel, and \gls{gcm} resists sequence number manipulation. This matches the finding of~\cite{USENIX:BauBriSch24} that \gls{gcm} is unaffected by the Terrapin attack.

Security therefore rests entirely on the invocation counter~\ictr. We prove it under the assumption that \ictr is never reused, is \emph{independent of} \sqn and aligned with the bookkeeping indices up to a fixed offset~$\delta$, that is, $\ictr_s = \idx[s] + \delta$ and $\ictr_r = \idx[r] + \delta$ with $\ictr_r$ advancing only on accepted packets. The independence assumption is essential. If an implementation instead computed the invocation counter from the sequence number---say $\ictr = \ictr_0 + \sqn$---then \ictr would be a function of \sqn, collapsing the state to \sqn. The generic attack of \cref{subsec:genattack} then applies, and the channel would \emph{not} be \intpcsaco-secure.

\begin{theorem}
\label{th:aesgcm}
    Under the \ictr assumption above, $\sfAlg_\gcmmode^\ssh$ is \intpcsacpa-secure, with \[\advantage{\intpcsacpa}{\adv} \leq \epsilon_{\gcmmode}^{\intctxt}.\]
\end{theorem}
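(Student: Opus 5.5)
We prove \cref{th:aesgcm} by a direct reduction: from any \gls{ppt} adversary~$\adv$ against \intpcsacpa for $\sfAlg_\gcmmode^\ssh$ we build an adversary~$\bdv$ against the \intctxt security of \gls{gcm} used as a nonce-based \gls{aead}. $\bdv$ has an encryption oracle $\enc(\aeadNonce,\aeadAD,\msg)$ and a verification/decryption oracle. It wins by submitting a valid $(\aeadNonce,\aeadAD,\cmsg\concat\aeadTag)$ that was never output by its encryption oracle.

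\textbf{Simulation.} $\bdv$ simulates the whole game of \cref{lst:model} for $\adv$. It chooses \challbit itself, keeps the bookkeeping $\idx[s],\idx[r],\msgStore,\setflag[s],\setflag[r],\rcvfailflag$, and keeps $\ictr_s=\idx[s]+\delta$ and $\ictr_r=\idx[r]+\delta$.
\begin{itemize}
  \item $\SET_s$ and $\SET_r$ only overwrite \sqn. By \cref{lst:sfAnyXcm}, \sqn enters neither \aeadNonce nor \aeadAD nor the state update of \ictr, so $\bdv$ just records it.
  \item $\SNDCPA(\msg)$ is answered by forming $\padssha{\msg}$, setting $\aeadNonce=\salt\concat\ictr_s$ and $\aeadAD=\len{\msg'}$, and querying its own encryption oracle. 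This is a nonce-respecting query, since \ictr is never reused.
  \item $\RCV(\sfHeader,\cmsg)$ is answered by querying the decryption oracle on nonce $\salt\concat\ictr_r$ with $\aeadAD=\sfHeader$. The \salt value is simply part of the nonce, so $\bdv$ never needs the key.
\end{itemize}
This simulation is perfect. The only divergence from the real game is that \sqn is ignored, and the real scheme ignores it too.

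\textbf{Key step: a winning event forces a forgery.} Let $E$ be the event that some \RCV call (before \rcvfailflag is set) obtains a successful decryption $\msg\neq\bot$ with $\msg\neq\msgStore[\idx[r]]$. Without $E$, every \RCV answer is $\bot$ independently of \challbit, so $\prob{\challbitp=\challbit\mid\neg E}=\tfrac12$. This gives $\advantage{\intpcsacpa}{\adv}\le\prob{E}$.

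Now suppose $E$ occurs on a query $(\sfHeader,\cmsg)$ with nonce $\salt\concat(\idx[r]+\delta)$. Because nonces are never reused, the only encryption query with that nonce is the $\idx[r]$-th \SND query, whose output decrypts to $\msgStore[\idx[r]]$. If no such query exists yet, any valid ciphertext under this nonce is trivially fresh. Otherwise, $(\sfHeader,\cmsg)$ must differ from that output, since the \gls{aead} is correct and decryption is deterministic. So $\bdv$'s query is a valid, previously unseen ciphertext, i.e.\ an \intctxt forgery. We therefore get $\prob{E}\le\epsilon^{\intctxt}_{\gcmmode}$.

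\textbf{Main obstacle: the counter bookkeeping.} The delicate point is justifying that $\ictr_r$ tracks $\idx[r]$ exactly. This requires that $\ictr_r$ advances only on accepted packets, which is the \ictr assumption. It also requires that, by the $L_4$ rule via \rcvfailflag, $\adv$ gains nothing after the first failure. We must also check that the first forgery is the first bad event, so that all earlier accepted packets were genuine and in order, which keeps the two counters synchronized. We argue this by induction over \RCV calls: before the first occurrence of $E$, each accepted packet equals the matching \SND output, so $\idx[r]$ and $\ictr_r$ stay aligned.
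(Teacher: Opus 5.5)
Your proposal is correct and is essentially the paper's proof. Your bad event $E$ with the explicit \intctxt reduction is the paper's hop from $G_0$ to $G_1$ (abort on any accepted triple never output by \SND, costing $\epsilon^{\intctxt}_{\gcmmode}$), and your key step is exactly its $G_2$ argument: under the unique nonce $\salt\concat(\idx[r]+\delta)$, a non-fresh triple must be the $\idx[r]$-th \SND output and therefore yields $\res=\bot$.

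Your closing induction and the appeal to the $L_4$ flag $\rcvfailflag$ are not actually needed, because the \ictr assumption already gives $\ictr[r]=\idx[r]+\delta$ at every \RCV call and rejected queries never advance it. As stated, the induction's claim also holds only up to the first forgery, not up to the first occurrence of $E$, since a forged packet that happens to decrypt to $\msgStore[\idx[r]]$ would be accepted without triggering $E$.
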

\begin{proof}[Sketch]
    \textbf{Game $G_0$} is the original \intpcsacpa game. By assumption the effective state on each side is the invocation counter, $\st[s] = \ictr_s = \idx[s] + \delta$ and $\st[r] = \ictr_r = \idx[r] + \delta$. Since the nonce is $\aeadNonce = \salt \concat \ictr$ and $\sqn$ enters neither nonce nor associated data, the \SET oracles---which write only \sqn---leave \SND/\RCV unaffected. The packet sent at index~$i$ uses nonce $\salt \concat (i + \delta)$; these nonces are pairwise distinct in~$i$.

    \textbf{Game $G_1$} aborts whenever \RCV accepts a tuple $(\sfHeader, \cmsg)$ for which the \gls{gcm} triple $(\aeadNonce, \aeadAD, \cmsg \concat \aeadTag)$ reconstructed by $\sfdec$---with nonce $\aeadNonce = \salt \concat (\idx[r] + \delta)$ from the receiver's state and associated data $\aeadAD = \sfHeader$ from the adversary's input---was never produced by an \SND query. Accepting such a fresh triple constitutes a \gls{gcm} ciphertext forgery, so $|\Pr[G_0] - \Pr[G_1]| \leq \epsilon_\gcmmode^\intctxt$. Hence, from $G_1$ on, $\RCV$ returns~$\bot$ on every query whose triple was not produced by an \SND query.

    \textbf{Game $G_2$} records, for each $\SND$ output, the pair $(\sfHeader, \cmsg)$ and the index~$j$ at which it was created; this is bookkeeping only, so $\Pr[G_2] = \Pr[G_1]$. Consider a query $\RCV(\sfHeader, \cmsg)$. The receiver reconstructs its triple under nonce $\salt \concat (\idx[r] + \delta)$; this is unique because each invocation counter is used only once; by $G_1$ the query is rejected unless $(\sfHeader, \cmsg)$ equals that \SND output; in this case decryption returns the recorded message $\msgStore[\idx[r]]$, and since $\msg = \msgStore[\idx[r]]$, the \RCV oracle sets $\res = \bot$. In both cases, $\RCV$ returns~$\bot$ regardless of the challenge bit, so $\adv$ has advantage~$0$ in $G_2$.
\end{proof}

\subsection{Insecurity of \glsfmtshort{eam}-\glsfmtshort{ctr}/-Stream under \glsfmtshort{kpa}}
\label{subsec:cpaeam1}

For \gls{eam}, the tag is $\mac(\key_a, \sqn \concat \msg')$ (\cref{lst:sfSshEaM}). For \gls{ctr} and Stream variants, where $\cmsg = \msg' \oplus ks$, this breaks \gls{kpa} integrity. An adversary that learns the plaintext~$\msg'$ also learns the keystream $ks = \cmsg \oplus \msg'$, and can therefore re-encrypt any plaintext for which it holds a valid tag under the receiver's keystream; $\SET_r$ then aligns the sequence number to match that tag, and the receiver accepts an out-of-order message. The attack requires \gls{kpa} rather than \gls{co}, since recovering $ks$ needs the plaintext, and it is specific to these modes.

\begin{theorem}
\label{th:eamkp}
    $\sfAlg^\ssh_\eamctr$ and $\sfAlg^\ssh_\eamstr$ are not \intpcsakpa-secure.
\end{theorem}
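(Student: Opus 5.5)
We construct an explicit \gls{ppt} adversary~$\adv$ against the \intpcsakpa game of \cref{lst:model}, instantiated with the \gls{eam} schemes of \cref{lst:sfSshEaM}, and show it achieves advantage close to $\frac{1}{2}$. The idea, as sketched above, is to obtain a valid tag for some plaintext under a known sequence number and then re-encrypt that plaintext under the receiver's \emph{current} keystream, so that it is accepted out of order. Since the receiver's secure state (counter or keystream position) is initially the same as the sender's initial secure state, the keystream the receiver will use first is exactly the keystream used for the first sent packet.

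The adversary proceeds as follows. It queries $\SNDKPA(\ell)$ twice with the same length~$\ell$, obtaining $(\msgStore[0], \sfHeader_1, \cmsg_1)$ and $(\msgStore[1], \sfHeader_2, \cmsg_2)$. Here $\cmsg_i = (\msg_i' \oplus ks_i) \concat \aeadTag_i$ with $\aeadTag_2 = \mac(\kauth, 1 \concat \msg_2')$. Because both plaintexts have equal length, $\padssh{\cdot}$ produces encodings of equal length with the same length fields, so $ks_1$ and $ks_2$ cover the same number of bytes.

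The step I expect to be the main obstacle is that the random padding bytes are not part of the returned plaintext~\msg. The adversary therefore does not immediately know the full encoded $\msg_i'$ needed to strip the keystream. I would resolve this by observing that the adversary does not need $\msg_i'$ separately. It only needs $ks_1 \oplus \msg_1'$, which is the encrypted part of $\cmsg_1$, and $\msg_2'$ encrypted under $ks_1$. For this it would suffice to know $ks_1$ fully, which requires the padding of $\msg_1'$. The fix is to note that decryption under CTR/Stream recovers the padding bytes: the receiver computes $\msg_2' = \tilde c_2 \oplus ks_1 \oplus \ldots$, so only the \emph{difference} matters. The forged ciphertext body is $\tilde c^\ast = \tilde c_1 \oplus \msg_1' \oplus \msg_2'$. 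Its unknown components are the padding bytes $P_1 \oplus P_2$, which are secret, so the adversary cannot compute this directly.

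My first attempt to handle this would be to exploit the modeling convention that the \gls{kpa} oracle reveals the full encoded plaintext $\msg'$, matching the informal description in the text preceding the theorem. Failing that, I would choose $\ell$ so that padding is minimal and bound the guessing probability. However, since $\blkAlign = \Omega(\secpar)$, that route only gives non-negligible advantage under the former reading. So I would commit to the reading that $\msg'$ is learned, i.e.\ $ks_1 = \tilde c_1 \oplus \msg_1'$.

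Given that reading, the adversary forms $\cmsg^\ast = (\msg_2' \oplus ks_1) \concat \aeadTag_2$, calls $\SET_r(1)$ (allowed since \RCV has not been called), and submits $\RCV(\varepsilon, \cmsg^\ast)$. The receiver uses $\sqn = 1$ and the initial secure state, so it decrypts to $\msg_2'$, and the \gls{mac} check on $1 \concat \msg_2'$ passes. The oracle compares the payload $\msgStore[1]$ against $\msgStore[0]$; these differ except with probability $2^{-8\ell}$. The adversary then outputs $\challbitp = 1$ if and only if the result is not $\bot$. Exactly as in the analysis of \cref{th:onlyext}, this yields $\advantage{\intpcsakpa}{\adv} \geq \frac{1}{2} - 2^{-(8\ell+1)}$. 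The argument is identical for \eamctr (the keystream is counter blocks from $\ctr[\mathsf{init}]$) and \eamstr (the keystream begins at the initial \pos).
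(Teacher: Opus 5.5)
Your final argument proves the theorem for a different oracle than the one in the model, so it has a gap. You read $\SNDKPA$ as revealing the full encoded plaintext $\msg'$, but the oracle returns only the payload $\msg$. The random padding bytes appended by $\padssh{\cdot}$ stay hidden. The sentence before the theorem is loose on this point, but the model definition and the paper's proof explicitly assume the adversary knows neither the padding of $\msg_2'$ nor $ks_1$ on the padding positions. Under the actual oracle, the obstacle you identified yourself is real. The forged body $\msg_2' \oplus ks_1$ is known everywhere except the padding positions. There it equals packet~1's ciphertext XORed with the unknown $P_1 \oplus P_2$. So your claimed advantage of nearly $\frac{1}{2}$ does not hold.

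The route you discarded is the correct one, and your reason for discarding it is wrong. The padding can always be forced to its minimum of exactly four bytes, whatever $\blkAlign$ is. Choose $\ell$ with $\ell + 4 + 5 \equiv 0 \pmod{\blkAlign}$, e.g.\ $\ell = 2\blkAlign - 9$; the factor two keeps $\ell > 0$ when $\blkAlign = 8$. The assumption $\blkAlign = \Omega(\secpar)$ lengthens the packets, not the minimal padding. Then only the 32 bits $P_1 \oplus P_2$ are unknown, and any fixed guess is right with probability $2^{-32}$. A rejected $\RCV$ query sets the fail-stop flag, so there is only one attempt. This gives advantage about $2^{-33}$, a constant and hence non-negligible, which is exactly how the paper concludes. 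With this correction, the rest of your adversary coincides with the paper's proof: two equal-length $\SNDKPA$ queries, $\SET_r(1)$, submitting the re-encrypted $\msg_2'$ with $\aeadTag_2$, and the comparison against $\msgStore[0]$.
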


\begin{proof}
We construct a \gls{ppt} adversary $\adv$ that proceeds as follows:
\begin{enumerate}
    \item $\adv$ queries $\SNDKPA(2\blkAlign - 9)$ twice.\footnote{The term $2\blkAlign - 9$ ensures minimal 4-byte padding even if $\blkAlign = 8$ in the case of \gls{eam}-Stream, considering that there are 5~bytes of length fields.} The oracle samples random $\msg_1, \msg_2$ and returns $(\varepsilon, \cmsg_i \concat \aeadTag_i, \msg_i)$, so $\adv$ learns both plaintext and ciphertext; they differ except with probability $2^{-8(2\blkAlign-9)}$. This yields an encryption input with minimal padding:
    \[
        \msg_i' = (2\blkAlign - 4) \concat \texttt{04} \concat \msg_{i,1}\!\cdots \msg_{i,2\blkAlign-9} \concat pad_{i,1}\!\cdots pad_{i,4},
    \]
    where $\aeadTag_i = \mac(\key_a, (i - 1) \concat \msg_i')$ and $\cmsg_i = \msg_i' \oplus ks_i$, with $ks_i$ the $i$-th packet's keystream. From $\msg_1'$, $\adv$ recovers $ks_1 = \cmsg_1 \oplus \msg_1'$ on all but the four padding bytes and keeps the valid \gls{mac} pair $(1 \concat \msg_2',\, \aeadTag_2)$; $\cmsg_2$ is unused. Now $\msgStore[0] = \msg_1$ and $\msgStore[1] = \msg_2$.

    \item $\adv$ calls $\SET_r(1)$, setting $\st[r].\sqn = 1$; this is admissible, as $\setflag[r] = 0$. Since $\SET_r$ only writes \sqn, the receiver's keystream is unchanged, so the first \RCV still uses $ks_1$.

    \item $\adv$ calls $\RCV(\varepsilon,\, \cmsg \concat \aeadTag_2)$ with $\cmsg = \msg_2' \oplus ks_1$. The oracle decrypts $\cmsg \oplus ks_1 = \msg_2'$ and checks $\mac(\key_a, \sqn \concat \msg_2')$; as $\st[r].\sqn = 1$ this equals $\aeadTag_2$, so $\sfdec$ returns $\msg_2$. With $\idx[r] = 0$ the oracle compares $\msg_2$ against $\msgStore[0] = \msg_1$, and since $\msg_2 \neq \msg_1$ it sets $\res = \msg_2 \neq \bot$.

    \item $\adv$ outputs $\challbitp = 1$ if \RCV returns a value other than $\bot$, and $\challbitp = 0$ otherwise.
\end{enumerate}
$\adv$ knows neither $ks_1$ nor the four padding bytes of $\msg_2'$, so the receiver recovers $\msg_2'$ and $\aeadTag_2$ verifies only with probability $2^{-32}$, giving $\advantage{\intpcsakpa}{\adv} \approx 2^{-33}$, which is non-negligible.
\end{proof}

\subsection{Security of \glsfmtshort{eam}-\glsfmtshort{ctr}/-Stream under \glsfmtshort{co}}
\label{subsec:cpaeam3}

Switching from \SNDKPA to \SNDCO removes the known payload that enabled the attack of \cref{subsec:cpaeam1}, leaving only structural fields known. The adversary can compute the first five bytes of $\msg'$ from the byte length $\len{\msg}$ it selects via \SNDCO: the packet length equals $\len{\msg} + n + 1$ and the padding length is $n$, where $n \geq 4$ with $\len{\msg} + n + 5 \equiv 0 \pmod \blkAlign$ and $n$ minimal. The adversary thus learns the keystream on five bytes of every packet, separated by the remaining $\blkAlign - 5$ random bytes of $\msg$ and $pad$. As $\blkAlign = \Omega(\secpar)$ in our model, this unknown gap grows with $\secpar$.

Both modes' encryption functions are structurally equivalent XOR stream ciphers: to encrypt $\msg'$, they compute a keystream $ks$ using a \gls{prg} and then compute $\cmsg = \msg' \oplus ks$. The \gls{mac} is computed over $\msg'$, so a forgery must reproduce a sent plaintext exactly. For the following proof, we assume that for \gls{ctr} the counter value is never reused, and for Stream the internal state of the keystream generator never repeats. In \gls{ssh}, this is achieved by regularly rotating the session key via a new key exchange.

\begin{theorem}
    \label{th:eamco}
    Let $\sfAlg$ be the \gls{eam} stateful encryption scheme over an XOR stream cipher $\enc$ and a \gls{mac} \mac. Then, for every \gls{ppt} adversary $\adv$, \[\advantage{\intpcsaco}{\adv} \leq \epsilon^{\sufcma}_{\mac} + \epsilon^{\prg}_{\enc} + 2^{-8(B-5)}.\]
\end{theorem}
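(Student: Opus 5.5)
We plan a short game-hopping argument that mirrors the proof of \cref{th:aesgcm}, with three hops matching the three terms of the bound.

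\textbf{Game $G_0$} is the original \intpcsaco game instantiated with $\sfAlg$.

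\textbf{Game $G_1$} replaces the keystream produced by the \gls{prg} with uniformly random bytes, on both the sender and the receiver side. For \gls{ctr} we use the assumption that counter values never repeat, and for Stream that the generator state never repeats. Under this assumption every keystream position is fresh, and the receiver, whose secure state is never touched by $\SET_r$, reads the same positions of the same random string. A distinguisher that simulates the rest of the game with its own \gls{mac} key then gives $|\Pr[G_0]-\Pr[G_1]| \le \epsilon^{\prg}_{\enc}$.

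\textbf{Game $G_2$} aborts whenever \RCV accepts a pair $(\sqn \concat \msg'', \aeadTag)$ that was never tagged by \SNDCO. We reduce this event to \sufcma of \mac. The reduction answers \SNDCO by sampling $\msg$, forming $\msg'$, querying its tagging oracle on $\sqn_s \concat \msg'$, and encrypting with its own random keystream. It answers \RCV by decrypting with the receiver keystream and checking the tag via a verification query, or equivalently by outputting the candidate pair as its forgery. This yields $|\Pr[G_1]-\Pr[G_2]| \le \epsilon^{\sufcma}_{\mac}$.

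From $G_2$ on, any accepted \RCV query must decrypt to some $\msg''$ such that $(\sqn_r \concat \msg'', \aeadTag)$ equals a previously sent pair $(\sqn_i \concat \msg'_i, \aeadTag_i)$. The receiver's keystream positions advance only on accepted packets, so the receiver's current positions are exactly those used for packet $\idx[r]$. Two cases arise:
\begin{itemize}
\item If $i=\idx[r]$, the decrypted plaintext equals $\msgStore[\idx[r]]$ and \RCV sets $\res=\bot$ for both values of $\challbit$.
\item If $i\neq\idx[r]$, the adversary must submit $\cmsg=\msg'_i\oplus ks_{\idx[r]}$.
\end{itemize}
In the second case $ks_{\idx[r]}$ is uniform, and the adversary knows it only on positions carrying structural fields (and on packet $\idx[r]$'s own ciphertext). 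Where it does know $ks_{\idx[r]}$, it needs $\msg'_i$ there, which it does not know outside the five structural bytes and any overlap it can exploit. The cleanest bound uses a per-block argument. Every \gls{ssh} packet contains at least one alignment block of $\blkAlign$ bytes whose last $\blkAlign-5$ bytes, made of random message and padding bytes, are unknown to the adversary. For $i\neq\idx[r]$, the $\msg'_i$ bytes are uniformly random and never observed in the clear. They appear only in $\cmsg_i=\msg'_i\oplus ks_i$, with $ks_i$ uniform and independent of $ks_{\idx[r]}$. Hence matching $\msg'_i$ on those $\blkAlign-5$ bytes succeeds with probability at most $2^{-8(\blkAlign-5)}$.

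The main obstacle is making this last step rigorous in a single term rather than a union over all queries and indices. We plan to handle it by observing that the $L_4$ failure flag \rcvfailflag stops the game after the first rejected \RCV query. Consequently only one guessing attempt is meaningful, namely the first out-of-order \RCV, and the remaining bound is $2^{-8(\blkAlign-5)}$ rather than a multiple of it. Finally, we must check that in $G_2$, absent this guessing event, every \RCV output is $\bot$ independently of $\challbit$, so that the advantage in $G_2$ is at most that event's probability.
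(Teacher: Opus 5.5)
Your route is the paper's: the same three terms (a \sufcma abort, a PRG-to-random hop, and a single $2^{-8(\blkAlign-5)}$ guess enforced by the fail-stop flag \rcvfailflag). The only change is that you swap the PRG and forgery hops, which is harmless because each reduction can simulate the other component itself. The problem is the last step, and the paper's sketch has the same gap when it calls the remaining bytes of $ks_r$ ``uniform and unseen''. You assert that the unknown bytes of $\msg_i'$ are never observed and ``appear only in $\cmsg_i$''. But every \SNDCO answer also contains $\aeadTag_i = \mac(\key_a, \sqn_i \concat \msg_i')$ in the clear, and \sufcma does not guarantee that a tag hides its input.

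This step genuinely fails. Take $\mac'(\key, x) = x \concat \mac(\key, x)$. It is \sufcma-secure whenever \mac is: a fresh valid pair $(x, x \concat t)$ for $\mac'$ yields the fresh valid pair $(x, t)$ for \mac. Under $\mac'$, a ciphertext-only adversary can run the following attack:
\begin{enumerate}
  \item read $\msg_1'$ and $\msg_2'$, padding included, off the tags;
  \item compute $ks_1 = \cmsg_1 \oplus \msg_1'$;
  \item call $\SET_r(1)$ and submit $(\msg_2' \oplus ks_1) \concat \aeadTag_2$.
\end{enumerate}
This is the attack of \cref{th:eamkp} with nothing left to guess. It achieves advantage $\frac{1}{2} - 2^{-(8\ell+1)}$ even for a perfectly random keystream, while the claimed bound is negligible.

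Hence the bound cannot be derived from \sufcma and PRG security alone; you need a hiding property of the MAC. For instance, assume \mac is a PRF (the standard assumption for HMAC in SSH) and add a hop replacing it by a random function. The inputs $\sqn_i \concat \msg_i'$ of the \SNDCO queries are distinct because their sequence numbers are, so the tags become independent uniform strings and the view reduces to the ciphertexts. Your counting then becomes valid: $\msg_i' \oplus ks_{\idx[r]} = \cmsg_i \oplus ks_i \oplus ks_{\idx[r]}$ is uniform on the $\blkAlign-5$ unknown bytes. The price is an additional $\epsilon^{\mathrm{PRF}}_{\mac}$ term in the bound.
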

\begin{proof}[Sketch]
    We argue over a sequence of games; w.l.o.g.\ packets are minimal and span one $\blkAlign$-block, of which the adversary knows the five structural bytes.

    \textbf{Game $G_0$} is the original \intpcsaco game.

    \textbf{Game $G_1$} aborts if \RCV accepts a pair $(\sqn_r \concat \msg^\ast, \aeadTag)$ that $\mac$ never produced during any \SND query, giving $|\Pr[G_0] - \Pr[G_1]| \leq \epsilon^{\sufcma}_{\mac}$.

    \textbf{Game $G_2$} replaces the keystream \gls{prg} with a uniformly random keystream, so $|\Pr[G_1] - \Pr[G_2]| \leq \epsilon^{\mathrm{PRG}}_{\enc}$.

    In $G_2$ the keystream is uniform. To make \RCV return a value other than $\bot$, the adversary must submit $\cmsg$ with $\cmsg \oplus ks_r = \msg_j'$ under the receiver's keystream block $ks_r$. It knows $ks_r$ on at most the five structural bytes (revealed by the packet sent under that state); the remaining $\blkAlign - 5$ bytes are uniform and unseen, so the single forgery attempt that the adversary can make before the channel terminates succeeds with probability at most $2^{-8(\blkAlign-5)}$.
\end{proof}

For \gls{ctr}, the \gls{prg} is a block cipher in counter mode, which is a \gls{prg} whenever the cipher is a \gls{prp}; the term $\epsilon^{\prg}_{\enc}$ reduces to the cipher's \gls{prp} advantage (up to the usual birthday bound) and is negligible, so \gls{eam}-\gls{ctr} is \intpcsaco-secure.

For Stream mode with RC4, the same reduction would apply but it is unclear whether the premise holds: \gls{rc4}'s keystream has positional biases~\cite{USENIX:ABPPS13}, so $\epsilon^{\prg}_{\mathsf{RC4}}$ may be non-negligible.

\subsection{Insecurity of \glsfmtshort{eam}-\glsfmtshort{cbc} under \glsfmtshort{cpa}}
\label{subsec:cpaeam2}

\gls{eam} with \gls{cbc} is not \intpcsacpa-secure: an adaptive chosen-plaintext adversary can mount a probabilistic forgery in the style of the \gls{beast} attack against \gls{tls}.

\begin{theorem}
\label{th:eamcbccpa}
    $\sfAlg^\ssh_\eamcbc$ is not \intpcsacpa-secure.
\end{theorem}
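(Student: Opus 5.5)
The plan is a BEAST-style splicing attack that exploits \gls{iv} chaining. The sender's \gls{iv} after each packet is the last ciphertext block, so the adversary knows it before choosing the next plaintext. The receiver, whose \sqn the adversary realigns via $\SET_r$, still decrypts its first packet under the secret $\IV[\mathsf{init}]$. The adversary therefore cannot reuse a sender ciphertext directly. Instead it will \emph{assemble} a receiver ciphertext from blocks produced in two different sender packets, and adaptively choose a third plaintext so that the sender computes a valid tag over exactly the plaintext the receiver will recover. As in \cref{th:eamkp}, the adversary outputs $\challbitp = 1$ iff \RCV returns something other than $\bot$.

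Concretely, I would fix a length $\ell$ such that every encoded packet $\msg'$ spans exactly two cipher blocks $P_0 \concat P_1$ with minimal 4-byte padding, all of it in the last block. The adversary leaves $\st[s].\sqn = 0$ and proceeds in four steps.
\begin{enumerate}
\item It queries $\SNDCPA(\msg_a)$. The sender encrypts under $\IV[\mathsf{init}]$, so the first ciphertext block is $C_{a,0} = E(\IV[\mathsf{init}] \oplus P_{a,0})$. Here $P_{a,0}$ is known, because it consists of the length fields and chosen payload bytes. The sender's \gls{iv} becomes $\IV_1 = \lastBlock{\cmsg_a}$.
\item It queries $\SNDCPA(\msg_b)$ for any $\msg_b$. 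Its first block $P_{b,0}$ is known, and the first ciphertext block is $C_{b,0} = E(\IV_1 \oplus P_{b,0})$. The sender's \gls{iv} becomes $\IV_2$, which is also known.
\item It adaptively chooses $\msg^\ast$ of the same length as $\msg_a$ and queries $\SNDCPA(\msg^\ast)$ at $\sqn = 2$, obtaining the tag $\aeadTag^\ast = \mac(\kauth, 2 \concat {\msg^\ast}')$. The choice is made so that the first block of ${\msg^\ast}'$ equals $P_{a,0}$, and its second block equals $P^\ast_1 = \IV_1 \oplus P_{b,0} \oplus C_{a,0}$. This is possible because the second block consists of payload bytes, except for the trailing padding.
\item It calls $\SET_r(2)$, which is admissible since \RCV has not yet been called. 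It then queries $\RCV(\varepsilon, C_{a,0} \concat C_{b,0} \concat \aeadTag^\ast)$.
\end{enumerate}

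The correctness check is a single line of \gls{cbc} algebra. Under $\IV[\mathsf{init}]$ the receiver obtains first block $D(C_{a,0}) \oplus \IV[\mathsf{init}] = P_{a,0}$. It obtains second block $D(C_{b,0}) \oplus C_{a,0} = \IV_1 \oplus P_{b,0} \oplus C_{a,0} = P^\ast_1$. Thus the decrypted packet is ${\msg^\ast}'$, the length field is consistent, and $\mac(\kauth, 2 \concat {\msg^\ast}') = \aeadTag^\ast$ verifies. Since $\idx[r] = 0$, the oracle compares against $\msgStore[0] = \msg_a$. By construction $\msg^\ast \neq \msg_a$, since the second blocks differ. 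Hence for $\challbit = 1$ the oracle returns $\res \neq \bot$, whereas for $\challbit = 0$ it always returns $\bot$.

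The main obstacle is the random padding. The last $\geq 4$ bytes of $P^\ast_1$ are chosen by the sender, not by $\adv$, so the required identity holds on those bytes only if the random padding happens to equal the corresponding bytes of $\IV_1 \oplus P_{b,0} \oplus C_{a,0}$. With minimal padding this happens with probability $2^{-32}$. This gives $\advantage{\intpcsacpa}{\adv} \approx 2^{-33}$, which I would argue is non-negligible exactly as in \cref{th:eamkp}.

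I would first try to remove this loss by shifting the splice point, so that the padding falls in a block that the receiver decrypts from an unmodified sender block. For example, the adversary could append a third ciphertext block taken from $\msg_b$'s packet, so that the receiver's final block equals a sender block whose padding the adversary need not predict. This only helps if that block's padding is also what the sender \gls{mac}ed in $\msg^\ast$, which seems unavoidable to cost the same guess. If it does not work out, I would state the theorem with the $2^{-33}$ advantage.
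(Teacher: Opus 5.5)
Your attack is correct, but it is a genuinely different splice from the paper's, and it gives a much better bound. Both attacks realign the receiver with $\SET_r$ and then submit a two-block ciphertext whose first block is the first sender block encrypted under $\mathsf{IV}_{\mathsf{init}}$. The difference is that the three packets play swapped roles.

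In the paper's attack, the tag $\tau_1$ of the non-adaptively chosen message~2 is reused. The adaptive third query is spent manufacturing, BEAST-style, a block $cb_4$ that should decrypt (after chaining with $cb_0$) to message~2's second plaintext block. Because $cb_4$ is the \emph{first} block of its packet, its five header bytes $\mathit{len}\concat\mathit{padlen}$ land uncontrolled in the target block. The paper therefore pays $2^{-40}$ on top of the $2^{-32}$ padding guess, ending with success $2^{-72}$ (for $B\geq 9$) and advantage $2^{-73}$.

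You instead take the spliced block from a packet whose first plaintext block is entirely known. The receiver's output $P_{a,0}\concat(\mathsf{IV}_1\oplus P_{b,0}\oplus C_{a,0})$ is thus fully predictable before the third query. You then use that query as a MAC oracle on exactly this plaintext at $\mathsf{sqn}=2$. The header bytes become harmless: in block~0 they coincide with $P_{a,0}$ because the lengths agree, and in block~1 they only XOR known values into a target you compute before choosing $m^\ast$. Only the four sender-chosen padding bytes remain. This gives success about $2^{-32}$ and advantage about $2^{-33}$, independent of $B$. It is therefore non-negligible under the same convention the paper applies to its own constant $2^{-73}$.

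The paper's version stays closer to the original BEAST pattern of targeting a fixed, previously sent message. Yours is $2^{40}$ times stronger, which would also change the practicality discussion following the theorem.

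One small correction: $m^\ast\neq m_a$ is not literally by construction, since the payload part of $P^\ast_1$ is fixed by block-cipher outputs. It fails only with probability about $2^{-8(B-4)}$, independently of the padding event, and for $B\geq 10$ you can force it through $m_b$'s payload. So it costs only a factor $1-\mathrm{negl}$.

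Your closing idea of moving the splice point can be dropped. The $2^{-32}$ padding guess is paid by the paper's attack as well.
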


\begin{proof}
Let $\mathit{pad}$ denote the four random padding bytes of message~$2$ below, which are unknown to the adversary, and let $\mathit{pad}^\ast$ be the adversary's guess for them. Write $T = \byte{f2}^{\blkAlign-4}\concat \mathit{pad}^\ast$. The following sequence causes the \SND oracle to emit three messages of two ciphertext blocks each; the first two carry $4$ padding bytes, the third $\blkAlign$ padding bytes:
\begin{align*}
\renewcommand{\arraycolsep}{2pt}
\begin{array}{rcl}
(\varepsilon, cb_0\concat cb_1\concat\aeadTag_0)
  &\gets& \SND(\byte{f0}^{\blkAlign-5}\concat \byte{f1}^{\blkAlign-4}),\\[0.2em]
(\varepsilon, cb_2\concat cb_3\concat\aeadTag_1)
  &\gets& \SND(\byte{f0}^{\blkAlign-5}\concat \byte{f2}^{\blkAlign-4}),\\[0.2em]
(\varepsilon, cb_4\concat cb_5\concat\aeadTag_2)
  &\gets& \SND\bigl((T \oplus cb_3 \oplus cb_0)_{5\dots \blkAlign-1}\bigr),\\[0.2em]
  &&\SET_r(1),\\[0.2em]
  m^\ast
  &\gets&
  \RCV(\varepsilon, cb_0\concat cb_4\concat\aeadTag_1).
\end{array}
\end{align*}
The first two messages have the same length and the same $\byte{f0}^{\blkAlign-5}$ prefix, so their first plaintext block is identical; call it $P_0$. Their second plaintext blocks are $\byte{f1}^{\blkAlign-4}\concat \mathit{pad}'$ and $\byte{f2}^{\blkAlign-4}\concat \mathit{pad}$, respectively. Since \gls{cbc} chains the IV, $cb_0$ is $P_0$ encrypted under the initial IV, and message~$3$ is encrypted under $cb_3$. Its first plaintext block is $P_1 = \mathit{len}\concat\mathit{padlen}\concat m_3$, where $\mathit{len}\concat\mathit{padlen}$ is the $5$-byte header added by \SND and $m_3 = (T \oplus cb_3 \oplus cb_0)_{5\dots \blkAlign-1}$ are the $\blkAlign-5$ payload bytes chosen by the adversary; the second block $cb_5$ is not used.

At the receiver, the IV is still the initial one (nothing has been received), and $\SET_r(1)$ sets its sequence number to $1$, the index under which message~$2$ and its tag $\aeadTag_1$ were sent. The receiver decrypts $cb_0\concat cb_4$ block by block. The first block $cb_0$ decrypts back to $P_0$, which is also the first block of message~$2$. The second block decrypts to $\dec(cb_4)\oplus cb_0 = P_1 \oplus cb_3 \oplus cb_0$, where $\dec$ is the block-cipher inverse. By the choice of $m_3$,
\begin{align*}
  \dec(cb_4)\oplus cb_0 =
  \bigl(\mathit{len}\concat\mathit{padlen} \oplus (cb_3 \oplus cb_0)_{0\dots 4}\bigr)
  \concat T_{5\ldots \blkAlign-1}.
\end{align*}
The tag $\aeadTag_1$ verifies exactly when this block equals message~$2$'s second block $\byte{f2}^{\blkAlign-4}\concat \mathit{pad}$. Bytes $5,\dots,\blkAlign-1$ agree iff the guess $\mathit{pad}^\ast$ matches $\mathit{pad}$ on the positions it covers; bytes $0,\dots,4$ agree iff the fixed header $\mathit{len}\concat\mathit{padlen}$ equals $(\byte{f2}^{\blkAlign-4}\concat\mathit{pad} \oplus cb_3 \oplus cb_0)_{0\dots 4}$, an event over the block-cipher outputs $cb_0$ and $cb_3$. Altogether, $\min(9, \blkAlign)$ bytes are outside the adversary's control, so the forgery succeeds with probability $2^{-72}$ for $\blkAlign \geq 9$ and $2^{-64}$ for $\blkAlign = 8$, where byte~$4$ of the target block is $\mathit{pad}_0$ rather than $\byte{f2}$. On success, \RCV returns message~$2$'s plaintext, which differs from message~$1$ stored in $\msgStore[0]$, so the receiver accepts a different message at position~$0$. The adversary outputs $\challbitp = 1$ if \RCV\ returns a value other than $\bot$ and $\challbitp = 0$ otherwise; as a non-$\bot$ answer occurs only when $\challbit = 1$ and the forgery succeeds, $\advantage{\intpcsacpa}{\adv} \geq 2^{-73}$ for every $\blkAlign \geq 8$.
\end{proof}

The probability $2^{-72}$ does not yield a practically exploitable attack. Each attempt transfers $8$ blocks, so achieving probable success requires $8\cdot 2^{72}= 2^{75}$ blocks---more than the $\approx 2^{64}$ birthday bound of a $128$-bit cipher, so in that case a collision attack is cheaper. However, our attack's cost is fixed in the block length $\blklen$, while a collision attack costs $\approx 2^{\sfrac{\blklen}{2}}$: for $\blklen \geq 150$ the comparison flips. \gls{ssh} rekeying has the same effect. With a rekey interval of $l$ blocks, the per-key collision probability is only $\sfrac{l(l-1)}{2^{\blklen+1}}$, requiring $\approx \sfrac{2^{\blklen+1}}{l}$ blocks: about $2^{97}$ for $l = 2^{32}$, the interval recommended by~\cite{rfc4344}.

\subsection{Security of \glsfmtshort{eam}-\glsfmtshort{cbc} under \glsfmtshort{kpa}}
\label{subsec:cpaeam4}

For the following proof, we assume the number of \SND invocations is bounded such that birthday collisions on the block cipher are negligible. In \gls{ssh}, this is achieved by timely rekeying of the session key.

\begin{theorem}
\label{th:eamcbckpa}
    $\sfAlg^\ssh_\eamcbc$ is \intpcsakpa-secure. For every \gls{ppt} adversary, the advantage is
    \[
        \advantage{\intpcsakpa}{\adv} \leq \epsilon^{\sufcma}_{\mac} + \epsilon^{\sprp}_{\enc} + 2^{-8B}.
    \]
\end{theorem}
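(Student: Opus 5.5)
We argue by a sequence of games mirroring the proofs of \cref{th:aesgcm} and \cref{th:eamco}. \textbf{Game $G_0$} is the original \intpcsakpa game for $\sfAlg^\ssh_\eamcbc$. \textbf{Game $G_1$} aborts whenever \RCV accepts a pair $(\sqn_r \concat \msg^\ast, \aeadTag)$ that was never output by \mac during an \SND query. Any such acceptance is directly a \sufcma forgery, so $|\Pr[G_0]-\Pr[G_1]| \leq \epsilon^{\sufcma}_{\mac}$. From $G_1$ on, any accepted query must reproduce, for some sent index $j$, both the MAC input $j \concat \msg_j'$ and the tag $\aeadTag_j$. Hence the receiver's sequence number must equal $j$ (which $\SET_r$ may arrange), and the decrypted encoded plaintext must equal $\msg_j'$ exactly. \textbf{Game $G_2$} replaces the block cipher by a uniformly random permutation, costing $\epsilon^{\sprp}_{\enc}$. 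Under the stated bound on \SND invocations we additionally condition on the absence of collisions among the blocks input to the permutation, which is negligible by assumption.

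The core step analyses $G_2$. \RCV returns non-$\bot$ only when $\challbit=1$ and the accepted plaintext differs from $\msgStore[\idx[r]]$. By the $L_4$ failure flag, the adversary effectively gets a single meaningful attempt. I would split into two cases according to the first ciphertext block $c_0$ submitted. If $c_0$ and all subsequent blocks are taken verbatim as the ciphertext of some sent message $j$, the receiver's chained IV is $\IV_r$, the current receiver IV. Correct decryption to $\msg_j'$ then requires $\IV_r$ to equal the IV under which $j$ was encrypted, i.e.\ $\IV_r$ equals the last ciphertext block preceding packet $j$. Since \SET cannot touch \IV and the receiver's IV only advances through accepted in-order packets, this forces $j=\idx[r]$. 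Decryption then returns $\msgStore[\idx[r]]$ and $\res=\bot$. In the remaining case, the adversary submits some block $c$ with a chaining value $x$ such that $\pi^{-1}(c)\oplus x$ must equal a specific target block of some $\msg_j'$ with $(c,x)$ not the original pair. This is the pattern exploited in \cref{th:eamcbccpa}, where a chosen plaintext turned $\pi^{-1}(cb_4)$ into the needed value.

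The main obstacle is the second case: showing that without chosen plaintexts the adversary cannot make $\pi^{-1}(c)\oplus x$ hit the target. My plan is to argue that every value $\pi^{-1}(c)$ known to the adversary has the form $P\oplus x'$, where $P$ is a plaintext block of a \SNDKPA message and $x'$ is a known ciphertext block. The payload in $P$ is sampled uniformly and independently by the oracle, and the adversary cannot influence it. For the forgery to succeed we need $P\oplus x'\oplus x = P_j$ for the target block $P_j$. The adversary controls $x$, but the one submitted block must produce the exact preimage relation on a block whose contents (fresh random payload or padding of the target, or of the source block $P$) it did not choose. I would formalise this by lazily sampling $\pi$ in $G_2$ and showing that, for the distinct-pair case, the relevant $\B$-byte block is uniform from the adversary's view. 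For a fresh $c$, $\pi^{-1}(c)$ is near-uniform. For a reused $c$ the adversary must match $P_j$ on the bytes it has not seen, namely the random padding, and the relation with a mismatched $x$ reduces to a collision that the no-collision conditioning excludes. This yields the residual term $2^{-8\B}$. I expect the delicate part to be handling reuse of known $c$ blocks paired with a different chaining value while the target block is fully known under \KPA. I would first check whether the padding bytes in the final block of each packet make every target block partially unknown, and if not, fall back to bounding by guessing a full block.
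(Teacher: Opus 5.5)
\paragraph{Gap in the second case.} Your games $G_0$--$G_2$ match the paper's, and your verbatim-replay case is fine. But the case you flag as the main obstacle is left open, and the tools you propose would not close it. There you treat the chaining value $x$ of the forged block as known or adversary-chosen, and you hope to exclude $\pi^{-1}(c)\oplus x=P_j$ using the no-collision conditioning or the unknown padding. Neither works:
\begin{itemize}
\item When $x$ is not the genuine chaining value of $P_j$, the relation $P\oplus x'=P_j\oplus x$ is not a collision between two points at which $\pi$ was evaluated ($\pi$ is never evaluated at $P_j\oplus x$). So conditioning on distinct $\pi$-inputs says nothing about it.
\item Under KPA, the first and middle blocks of multi-block packets contain no padding and are fully known. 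An adversary could therefore test such relations offline over polynomially many candidate tuples before spending its single attempt.
\end{itemize}
This is exactly the splice behind \cref{th:eamcbccpa}. Falling back to \emph{guessing a full block} is not sound against an adversary that can pre-filter candidates.

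\paragraph{Missing idea.} The paper's sketch rests on pinning down when that single attempt occurs and what the receiver's IV is at that moment. $\SET_r$ is usable only before the first \RCV, and $\SET_s$ only before the first \SND. Every accepted \RCV advances $\st[r].\sqn$, $\st[r].\IV$ and $j_r$ together. Hence the offset between $\st[r].\sqn$ and the sequence number under which $M[j_r]$ was tagged is frozen before any \RCV.
\begin{itemize}
\item If the offset is zero, then by $G_1$ every accepted packet decrypts to $M[j_r]$ and $\res=\bot$, whatever ciphertext or IV is involved. This subsumes your first case.
\item If the offset is nonzero, the very first \RCV either is accepted with a fresh plaintext or sets $f_r$. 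So the only meaningful attempt is made under the secret $\IV[\mathsf{init}]$, not under an IV the adversary knows or controls.
\end{itemize}
In the paper's one-block setting, the forged ciphertext must then be $\pi(m_k'\oplus\IV[\mathsf{init}])$ with $k\neq 0$. Only packet~$0$ was encrypted under $\IV[\mathsf{init}]$, so $\pi$ was never evaluated at this point, and this yields the $2^{-8B}$ term.
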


\begin{proof}[Sketch]
    We argue over a sequence of games; w.l.o.g.\ packets are minimal and span one $\blkAlign$-block.

    \textbf{Game $G_0$} is the original \intpcsakpa game.

    \textbf{Game $G_1$} aborts if \RCV accepts a pair $(\sqn_r \concat \msg', \aeadTag)$ that $\mac$ never produced during an \SND query, giving $|\Pr[G_0] - \Pr[G_1]| \leq \epsilon^{\sufcma}_{\mac}$.

    \textbf{Game $G_2$} replaces the block cipher encryption $\enc$ with a random permutation $\pi$, so $|\Pr[G_1] - \Pr[G_2]| \leq \epsilon^{\sprp}_{\enc}$. In $G_2$ the permutation is uniform. \gls{iv} chaining couples the receiver state---an accepted \RCV advances $\st[r].\sqn$, $\st[r].\IV$, and $\idx[r]$ in lockstep---and the only oracle that moves $\st[r].\sqn$ on its own is $\SET_r$,  which is admissible only before the first \RCV, when $\st[r].\IV$ is still the secret initial $\IV[\mathsf{init}]$. If the adversary advances the receiver by delivering genuine packets---learning $\st[r].\IV$ via chaining---then by $G_1$ any accepted \RCV must decrypt to the unique $\msg_k'$ tagged under $\st[r].\sqn$. As CBC decryption under a fixed \gls{iv} is a bijection, the submitted ciphertext must be the matching \SND output, so stateful decryption returns $\msgStore[\idx[r]]$ and gives $\res=\bot$; a known \gls{iv} is useless. To reach a fresh target, the adversary must therefore use $\SET_r$, possible only before the first \RCV, where $\st[r].\IV=\IV[\mathsf{init}]$; the fail-stop ($\rcvfailflag$) then permits a single attempt. As only the first packet was encrypted under $\IV[\mathsf{init}]$ and a fresh target has $k\neq0$, the ciphertext $\pi(\msg_k'\oplus \IV[\mathsf{init}])$ is a fresh output of $\pi$, matching with probability $2^{-8\blkAlign}$.
\end{proof}

\section{Summary and Outlook}
\label{sec:outlook}
In our extended theoretical model, we have shown that if a vulnerability corresponding to a \SET oracle exists in an implementation, four out of eight stateful encryption schemes can be broken, while the security of one additional scheme depends on the security of \gls{rc4}. In \gls{ssh} without the strict KEX countermeasure, the protocol exhibits a real-world \SET oracle, as shown in~\cite{USENIX:BauBriSch24}. When considering \glsxtrfullpl{kpa} and \glsxtrfullpl{cpa} adversaries, all modes are insecure except \gls{gcm} and, for \gls{kpa} only, \gls{eam}-\gls{cbc}. These results do not consider layers above the \glsxtrlong{bpp}, which on the one hand implement additional validity checks but on the other hand have known-plaintext components and might even allow limited chosen plaintext. We leave a study of the \gls{kpa}/\gls{cpa} potentials of these layers as future work.

As for insecure modes, we note that not all attacks are equally effective. In particular, the \gls{kpa}-insecurity of \gls{eam}-\gls{ctr} and \gls{eam}-Stream, as well as the \gls{cpa}-insecurity of \gls{eam}-\gls{cbc}, depend on the minimum padding length, which is 4 bytes in \gls{ssh}, independent of the block length. A modification to the \gls{ssh} specification or its implementations to increase the minimum padding length could strengthen the security of these modes in our model. Finally, we cannot prove the security of RC4-based stream cipher modes in \gls{ssh} mainly due to defects in \gls{rc4} itself. On the other hand, we cannot provide an attack against \gls{rc4}, leaving the question of its security in \gls{ssh} channels open.

There are many interesting constructions of secure channels that deserve a similar treatment as \gls{ssh}. First, there are \emph{standardized} channels like \gls{tls}, \gls{ipsec} \gls{esp}/\gls{ah}~\cite{RFC4303,RFC4302,EC:PatYau06,SP:DegPat07,CCS:DegPat10}, \gls{dtls}~\cite{RFC6347,RFC9147,SP:AlFPat13,USENIX:EMMSS23}, \gls{srtp}, and, perhaps most interestingly, QUIC~\cite{RFC9000,EPRINT:FisGunJan20}. Second, there are non-\gls{ietf} protocols like tcpcrypt, MinimaLT, CurveCP, WireGuard, and OpenVPN (see~\cite{RFC8922} for references). Some of these protocols are widely deployed, but reliable empirical data is lacking. To better understand the security of these schemes, a thorough analysis of the stateful encryption layer (and the \gls{ake} protocol used) would be desirable.

\bibliographystyle{splncs04}
\bibliography{bib/abbrev2,bib/crypto,bib/rfc_official,bib/paper}

\clearpage
\appendix

\section{Ethical Considerations}
\label{sec:ethics}
\subsection{Stakeholder Analysis}

The results of this work are immediately relevant to protocol designers and standards bodies, such as the \gls{ietf}, who can use them to improve their understanding and design of internet standards that rely on \gls{aead} and stateful encryption. In addition, they are relevant to security researchers, who may apply comparable methodologies to other protocols. Our findings can help \gls{ssh} library implementers prioritize the selection of stateful encryption modes, enhance their documentation, and warn users about possible dangers. Our results may further aid in finding appropriate mitigations for any weaknesses similar to the Terrapin attack that may be found in the future. Indirect benefits are also obtained by end users of products that use these protocols.

We acknowledge that adversaries can also benefit from improved understanding of protocol weaknesses; for example, they might be able to identify new attacks more efficiently by misusing the information in this work. In our assessment, the defensive benefits significantly outweigh any potential advantage given to adversaries.

\subsection{Impact}

In light of ethical standards, we considered how this work would benefit the aforementioned stakeholders, particularly in terms of enhancing protocol security and addressing state-based vulnerabilities in our adversarial model. We also considered the possible advantages of informing the design of future protocols and creating strong defenses against the Terrapin attack and other vulnerabilities of a similar nature. Our study did not involve human subjects, personal information, live system interaction, or other consent- or privacy-related concerns.

Our findings apply to a wide range of protocols and do not disproportionately impact any specific user group. The decision to focus on \gls{ssh} rather than similar protocols such as \gls{tls}, \gls{ipsec}, \gls{dtls}, \gls{srtp}, and QUIC reflects the Terrapin attack's focus on \gls{ssh}. We believe that a thorough examination of a single protocol benefits stakeholders more than a broader but more superficial study.

This work adheres to the established norms of responsible security research and does not present any new practical attacks beyond those in the original Terrapin publication~\cite{USENIX:BauBriSch24}, out of respect for law and the public interest. Therefore, we believe that conducting and publishing this research is unlikely to cause direct harm to stakeholders. No responsible disclosure was required, given that the Terrapin attack had already been disclosed in~\cite{USENIX:BauBriSch24}.

\subsection{Mitigations}

Our analysis is abstract and model-based and does not directly lead to exploitable attacks in real implementations. The focus of this work is on classification and understanding of stateful encryption and \gls{aead} modes. These results can be used to implement robust mitigations against the Terrapin attack and potential future attacks similar to it, as well as systematically harden protocol designs against such vulnerabilities. In this way, our work encourages safer cryptographic design and deployment.

\subsection{Decision to Publish}

We believe that the stakeholders protecting against attacks would suffer more from withholding our findings than adversaries would. Adversaries currently possess the Terrapin attack method, and their capabilities would not be significantly enhanced by withholding a formal analysis. Defenders, on the other hand, need to systematically find and fix pertinent flaws in protocols and implementations. For these reasons, we conclude that publication of this work is ethically justified.

\section{Generative AI Usage}
\label{sec:ai}
For this work, we used generative AI, namely ChatGPT, Claude, and Google Gemini, to proofread the paper and identify possible gaps. All issues flagged by the AI were manually verified and, where necessary, corrected. Furthermore, we used these tools alongside LanguageTool to improve grammar and spelling, and for light style polishing, that is, by rewriting individual sentences for the final submission.

\clearpage
\section{Formal Model for Indistinguishability of Ciphertexts}

\Cref{lst:model2} is an extension of our Terrapin security model. It replaces the \gls{cpa} oracle $\SND(\msg)$ with a left-or-right oracle $\SND(\msg_0, \msg_1)$. Note that this security model is strictly stronger than the Terrapin model, since we can use the left-right oracle as a \gls{cpa} oracle by calling $\SND(\msg, \msg)$.

\begin{figure*}
    \input{img/cryptocode/ind-cca-mbe}
    \caption{Our model for ciphertext indistinguishability under adaptive chosen-ciphertext attacks. The pseudocode defines a single-instance security game played between an adversary $\adv$ and a \indpcsaccaii challenger, where the adversary is provided access to the \SND~and \RCV~oracles as well as the \SET oracles from \cref{lst:model}. The array $\msgStore$ records which message is allowed at which index in the channel.}
    \label{lst:model2}
\end{figure*}

\end{document}